\newcommand{\cmark}{\ding{52}}
\newcommand{\xmark}{\ding{56}}
\newtheorem{proposition}{Proposition}
\newtheorem{assumption}{Assumption}
\theoremstyle{definition}
\newtheorem{remark}{Remark}
\def\T{{ \top }}
\newcommand{\ind}{\stackrel{\mathclap{\normalfont\text{\tiny ind}}}{\sim}}
\newcommand{\given}{\mid}
\newcommand{\biggiven}{\,\middle|\,}
\newcommand{\Real}{\mathbb{R}}
\newcommand{\calD}{\mathcal{D}}
\newcommand{\calL}{\mathcal{L}}
\newcommand{\calS}{\mathcal{S}}
\newcommand{\lTilde}{\tilde{\ell}}
\newcommand{\ITilde}{\tilde{I}}
\newcommand{\Borel}{\mathfrak{B}}
\newcommand{\BigO}{\mathcal{O}}
\newcommand{\calT}{\mathcal{T}}
\newcommand{\calM}{\mathcal{M}}
\newcommand{\GP}{\mathsf{GP}}
\newcommand{\Norm}{\mathsf{N}}
\newcommand{\IG}{\mathsf{IG}}
\newcommand{\tdist}{\mathsf{t}}
\newcommand{\chol}{\mathsf{chol}}
\title[Bayesian predictive stacking for spatially-temporally misaligned data]{Bayesian inference for spatially-temporally misaligned data using predictive stacking}
\author{Soumyakanti Pan$^{\star}$ \and Sudipto Banerjee$^{\star}$}
\address{$^\star$Department of Biostatistics, University of California Los Angeles}
\email{span18@ucla.edu, sudipto@ucla.edu}
\date{\today}
\begin{document}
\begin{abstract}
    Air pollution remains a major environmental risk factor that is often associated with adverse health outcomes. However, quantifying and evaluating its effects on human health is challenging due to the complex nature of exposure data. Recent technological advances have led to the collection of various indicators of air pollution at increasingly high spatial-temporal resolutions (e.g., daily averages of pollutant levels at spatial locations referenced by latitude-longitude). However, health outcomes are typically aggregated over several spatial-temporal coordinates (e.g., annual prevalence for a county) to comply with survey regulations. This article develops a Bayesian hierarchical model to analyze such spatially-temporally misaligned exposure and health outcome data. We introduce Bayesian predictive stacking, which optimally combines multiple predictive spatial-temporal models and avoids iterative estimation algorithms such as Markov chain Monte Carlo that struggle due to convergence issues inflicted by the presence of weakly identified parameters. We apply our proposed method to study the effects of ozone on asthma in the state of California.
    
    \smallskip
    \noindent \textbf{Keywords.} change of support, modular inference, model combination, weak identifiability
\end{abstract}
\maketitle

\section{Introduction}
Spatial and temporal misalignment refers to the setting in which different variables are observed over incompatible spatial supports and/or at asynchronous time points or intervals. To be more specific, spatial misalignment arises when variables are measured at different spatial resolutions, i.e., different sets of locations or areas. For example, a variable might be observed at points (e.g. air quality monitoring stations), which we call \emph{point-referenced} data, while another is aggregated over administrative units (e.g., counties or census tracts), which we call \emph{block} data. Similarly, temporal misalignment occurs when variables are recorded on different time scales. 

In this article, we devise a Bayesian hierarchical modeling framework to study the effects of ozone on asthma-related health emergencies among California residents. Several studies collectively underscore the significant impact of exposure to ozone on asthma-related emergency department visits \citep[for e.g.,][]{Gharibi2019, NASSIKAS2020}. We obtain data on the monthly average concentration of ozone in California measured by air quality monitoring stations located at different locations throughout the state. Thus, the ozone measurements are spatially point-referenced and temporally aggregated at a monthly scale. However, data on asthma-related emergency department visits among Californian residents are reported at the county level and aggregated annually. We refer to such disparity in both spatial and temporal resolutions as spatial-temporal misalignment. 

Spatially-temporally misaligned data presents significant challenges for coherent modeling, prediction, and inference. This problem is well known as the \emph{change of support} and \emph{modifiable areal unit} problem \citep{Cressie1996, Mugglin2000, gelfand2001_biostat, Gotway2002}. Most of the literature focuses on estimating variables at unobserved spatial resolutions or integrating data across varying spatial scales \citep{banerjee_spatial, zhongMoraga2023}. However, efforts to estimate the association between an outcome of interest and a spatially temporally misaligned covariate remain rather limited \citep{ZhuCarlinGelfand2003, cameletti2019}, with existing approaches typically overlooking temporal misalignment. Moreover, a key challenge in analyzing spatial-temporal exposure data is missing observations resulting from intermittent monitoring or data removal due to quality issues such as sensor failures. Traditional approaches often use imputation methods that aim to reconstruct the complete data set \citep[for e.g.,][]{ZhuCarlinGelfand2003, quick2013aoas}. Our approach forgoes imputation and works directly with the available irregularly spaced data to produce a more robust framework for spatial-temporal analysis.

We contribute in two novel aspects. First, we propose a modular Bayesian inference framework \citep{Bayarri2009, jacob2017} that regresses an outcome on a spatially-temporally misaligned covariate, based on noisy observations of the latter, ensuring fully model-based propagation of inferential uncertainty. Second, we develop predictive stacking for estimation of such models that analyze spatially-temporally misaligned data, thus representing a methodological advancement over previous work that included spatial Gaussian data \citep{zhang2024stacking, wakayama2024, presicce2025}, and spatial-temporal non-Gaussian data \citep{pan2025stacking}. Stacking \citep{wolpert1992stacked, breiman1996stacked, clyde2013bayesian} is conspicuous in machine learning as an effective alternative \citep[][]{le2017bayes, yao2018using} to traditional Bayesian model averaging \citep{madigan1996bayesian, hoeting1999bma}. The underlying idea in predictive stacking is to optimally assimilate posterior distributions on a grid of candidate values corresponding to intractable and weakly identified hyperparameters, such as the spatial-temporal decay, smoothness, and the noise-to-spatial-temporal variance ratio \citep{Zhang2004, ZhangZimmerman2005, tangEtAl2021}, which impede the convergence of iterative algorithms such as Markov chain Monte Carlo (MCMC). Stacking also differs from quadrature-based approaches, such as INLA \citep{inla2009}, in that we avoid approximating the posterior distribution of weakly identified parameters. Instead, we average, or ``stack'' individual posterior distributions, using weights obtained by optimizing a proper scoring rule \citep{Gneiting2007}.

The remainder of the manuscript is structured as follows. Section~\ref{sec:data-description} describes the data set that motivates our methodology. Section~\ref{sec:bayesianmodel} introduces our Bayesian hierarchical model for analyzing the spatially-temporally misaligned data and states model assumptions that are critical for posterior inference. Section~\ref{sec:stacking} develops predictive stacking and focuses on developing a computationally efficient algorithm for model estimation. Section~\ref{sec:data-analysis} describes the detailed analysis of our data set, while Section~\ref{sec:discussion} concludes with a brief discussion.

\section{Data}\label{sec:data-description}
\begin{figure}[t]
    \centering
    \includegraphics[width=0.95\linewidth]{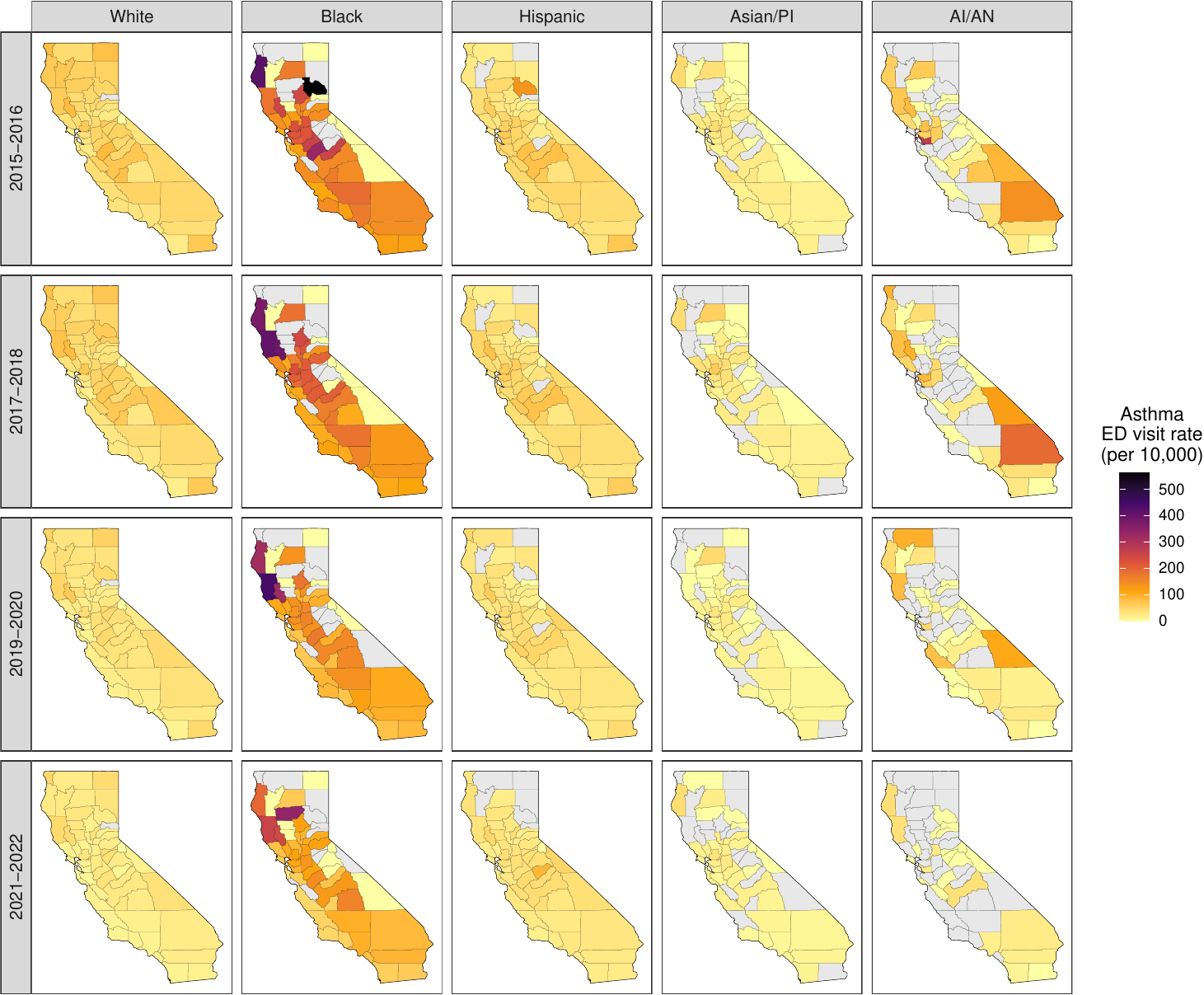}
    \caption{Biennial average asthma-related emergency department visit rates (per 10,000) by racial group for each California county from 2015 through 2022. For visualization, rates are averaged over consecutive 2-year periods.}
    \label{fig:asthma-EDV}
\end{figure}
The data set on adverse health outcomes comprises annual county-level rates of visits to the emergency department (ED) related to asthma per 10,000 residents of California. These data are obtained from the California Department of Public Health (CDPH) and are derived from the Department of Health Care Access and Information's Emergency Department database, which includes records from all licensed hospitals in the state \citep{asthma-CDPH}. We analyze data consisting of age-adjusted rates stratified by race/ethnicity (white, black, Hispanic, Asian/Pacific Islander, American Indian/Alaskan Native), derived from annual aggregated counts of asthma-related ED visits from 2015 to 2022 for each of the 58 counties in California. 
ED visit counts are based only on primary discharge diagnosis codes. The database omits rates based on (1) counts $< 12$ due to statistical instability, and (2) counts $\leq 5$ according to the CalHHS Data De-identification Guidelines. This results in approximately 35\% missing data 
that mainly affect the Asian/Pacific Islander and American Indian/Alaskan Native groups. Figure~\ref{fig:asthma-EDV} maps the aggregated two-year rates of asthma-related emergency department visits observed for each county in California during 2015-2022, revealing a clear pattern of racial disparity, with some counties consistently showing high rates. 
The age-adjusted rates are highly positive skewed, ranging from 0 to 560.9 per 10,000 residents, with the majority between 15 and 60 (see Figure~\ref{fig:asthma-hist} in the Appendix). Hence we analyze log-transformed rates, as they are more amenable to Gaussian assumptions, that yields analytically tractable posteriors, enabling inference without iterative algorithms.

For exposure data, we extracted hourly measurements of ozone concentration (in parts per million) from the California Air Quality and Meteorological Information System (AQMIS) database of the California Air Resources Board for the years 2015-2022. These data were recorded from about 200 air quality monitoring sites in California that were active during this time period. Figure~\ref{fig:ca-county} maps the geographic locations of these monitoring sites located in the county boundaries of California, highlighting the heterogeneous spatial distribution of air quality surveillance in counties. 
\begin{figure}
    \centering
    \includegraphics[width=0.6\linewidth]{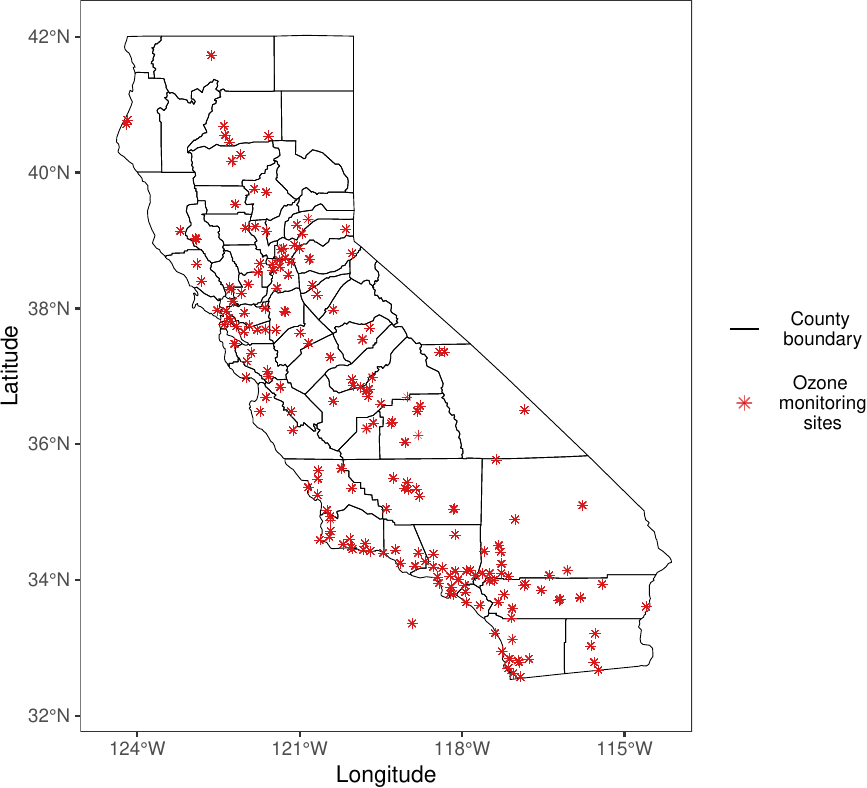}
    \caption{County boundaries of California and the geographic locations of 200 ozone monitoring sites active from 2015 to 2022.}
    \label{fig:ca-county}
\end{figure}
The figure also reveals a clear regional clustering of sites near urban and coastal areas in central and southern California, with notable sparsity in the eastern rural inland regions of the state. For our analysis, we aggregate the hourly data into monthly average ozone concentrations. Due to the fact that not all monitoring sites were active throughout the study period and many have missing monthly records, the resulting data set is temporally unbalanced between sites. In total, the data set comprises measurements of ozone concentrations at 15,725 unique spatial-temporal locations, where the temporal component is referenced by monthly intervals rather than exact timestamps. Figure~\ref{fig:oz-surface} features interpolated spatial surfaces of annual average ozone concentration between 2015 and 2022, showing a clear spatial pattern, with the sparsely sampled eastern inland regions recording higher ozone concentrations, demonstrating spatial imbalances in monitoring coverage.
\begin{figure}
    \centering
    \includegraphics[width=0.95\linewidth]{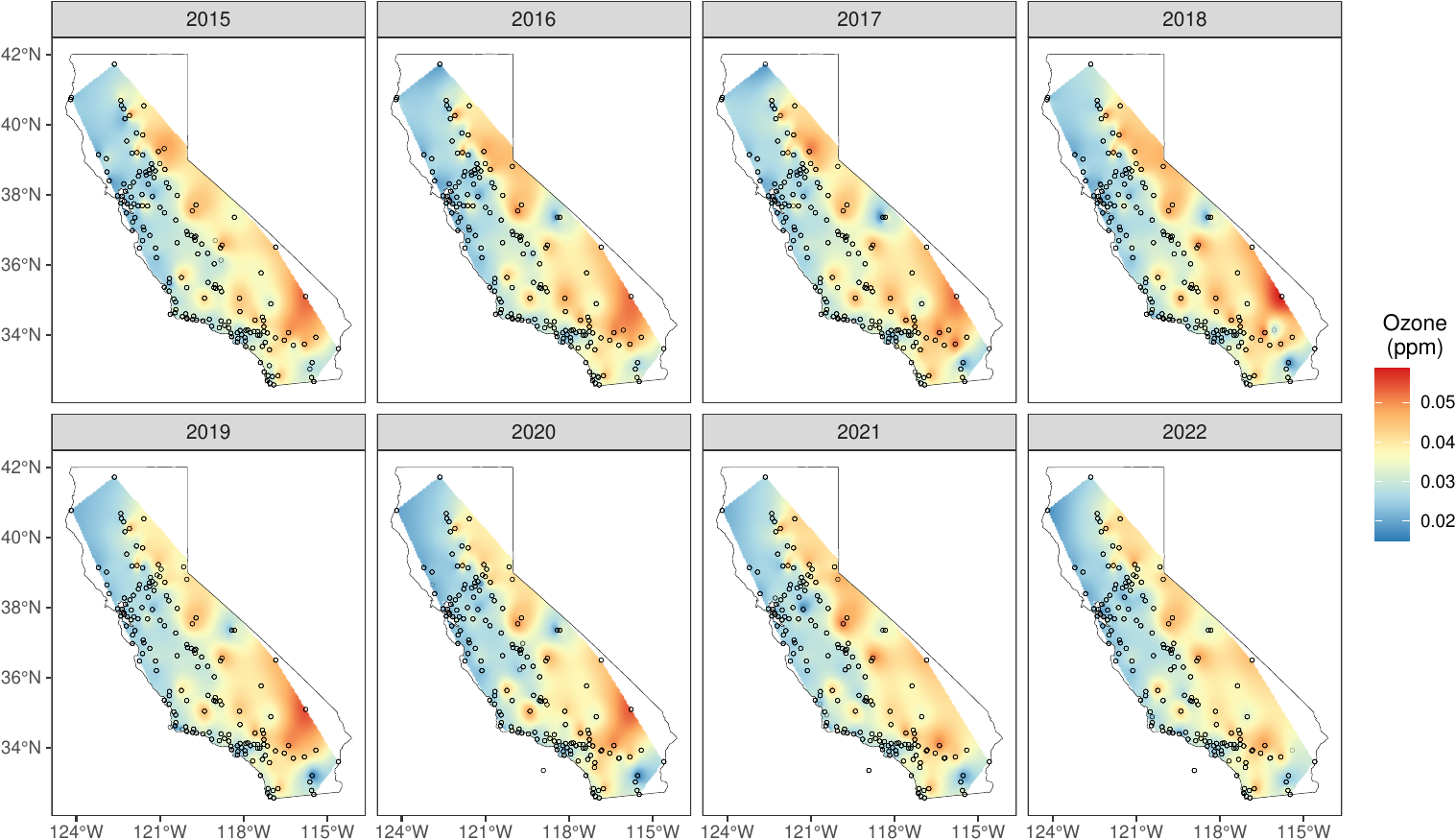}
    \caption{Interpolated spatial surface of annual (summed over months) ozone concentration (in parts per million) for California from 2015 to 2022. The geographic coordinates of the air quality monitoring stations are marked by black circles.}
    \label{fig:oz-surface}
\end{figure}

\section{Bayesian spatial-temporal hierarchical model}\label{sec:bayesianmodel}
\subsection{Multi-resolution spatial-temporal process}\label{subsec:spt-process}
We consider a spatial-temporal process as an uncountable set of random variables, say $\{Z(\ell): \ell\in \calD\}$, which is endowed with a probability law specifying the joint distribution for any finite sample of locations in $\calD = \calS\times \calT$, where $\calS \subset \Real^2$ and $\calT \subset [0,\infty)$ are space and time domains, respectively, and $\ell = (s,t)$ is a spacetime coordinate with $s \in \calS$ and $t\in \calT$ \citep[see, e.g.,][]{gnei10}. Subsequently, we define two new types of space-time coordinates $\lTilde = (\tilde{s}, \tilde{I})$ and $L = (B, I)$, where $\tilde{s} \in \calS$, $B \subset \calS$ denotes a block or a region within $\calS$, and $I, \tilde{I} \subset \calT$ denote intervals in $\calT$. We extend the spatial-temporal process $Z(\ell)$ to these new coordinate systems as $\{Z(\lTilde): \lTilde \in \calS \times \Borel(\calT) \}$ and $\{Z(L): L \in \Borel(\calS \times \calT) \}$, where for any set $A$, $\Borel(A)$ denotes the collection of Borel-measurable subsets of $A$ \citep{billingsley1995}, and
\begin{equation}\label{eq:process-def}
\begin{split}
    Z(\lTilde) = Z(\tilde{s}, \tilde{I}) = |\tilde{I}|^{-1} \int_{\tilde{I}} Z(\tilde{s}, t) dt\;,\quad
    Z(L) = Z(B, I) = (|B| |I|)^{-1} \int_I \int_B Z(s, t) ds dt\;,
\end{split}
\end{equation}
where $|B| = \int_B 1ds$ denotes the area of $B$, and $|I| = \int_{I} 1 dt$, $|\tilde{I}| = \int_{\tilde{I}} 1 dt$ denote the lengths of the intervals $I$ and $\tilde{I}$, respectively. This implies that the random variable $Z(\lTilde)$ denotes a realization of the stochastic process obtained by a temporal averaging of $Z(\ell)$ over the interval $\tilde{I}$ at location $\tilde{s}$. In our context, this corresponds to the monthly average concentration of ozone at the location $\tilde{s}$, averaged over the month $\tilde{I}$. Similarly, $Z(L)$ is a realization from the stochastic process obtained by averaging $Z(\ell)$ over a spatial-temporal block $L$, which includes averaging spatially over the region $B$ and averaging temporally over the interval $I$. This may represent ozone concentrations in the county defined by the spatial block $B$, averaged over the annual interval $I$.

A stationary Gaussian process specification for $Z(\ell)$ enables us to analytically derive a joint distribution for the spatial-temporal process at each resolution. To be specific, suppose $Z(\ell)$ is a Gaussian process with mean function $\mu(\ell; \gamma)$ and covariance function $\sigma^2 C(\ell, \ell'; \phi)$, where $\mu(\cdot; \gamma)$ denotes a trend surface with coefficient vector $\gamma$, $\sigma^2$ denotes the spatial-temporal variance, and $\phi$ denotes a generic parameter vector characterizing spatial-temporal decay or smoothness. We collectively call $\phi$ as ``process parameters''. For any given spatial-temporal coordinates 
$\{\lTilde_j = (\tilde{s}_j, \tilde{I}_j) : j = 1, \ldots, N \}$ and $\{L_k = (B_k, I_k) : k = 1, \ldots, K \}$, define
the $N \times 1$ vector $Z_{\lTilde} = (Z(\lTilde_1), \ldots, Z(\lTilde_N))^{\T}$, and the $K \times 1$ vector $Z_{L} = (Z(L_1), \ldots, Z(L_K))^{\T}$. Then, we have
\begin{equation}\label{eq:jointGP}
\begin{split}
    p \left( \begin{pmatrix} Z_{\lTilde}\\ Z_{L} \end{pmatrix} \biggiven \gamma, \sigma^2, \phi \right) = \Norm \left( \begin{pmatrix} Z_{\lTilde}\\ Z_{L} \end{pmatrix} \biggiven \begin{pmatrix} \mu_{\lTilde}(\gamma)\\ \mu_L(\gamma) \end{pmatrix}, \sigma^2 
    \begin{bmatrix}
        C_{\lTilde}(\phi) & C_{\lTilde, L}(\phi)\\
        C_{\lTilde, L}(\phi)^{\T} & C_{L}(\phi)
    \end{bmatrix}
    \right) \;,
\end{split}
\end{equation}
where 
\begin{equation}\label{eq:covs}
\begin{split}
    (\mu_{\lTilde}(\gamma))_j = |\tilde{I}_j|^{-1} \int_{\ITilde_j} &\mu((\tilde{s}_j, t); \gamma) dt, \quad (\mu_{L}(\gamma))_k = (|I_k| |B_k|)^{-1} \int_{L_k} \mu(\ell; \gamma) d\ell\;,\\
    (C_{\lTilde}(\phi))_{j, j'} &= (|\ITilde_j| |\ITilde_{j'}|)^{-1} \int_{\ITilde_{j'}} \int_{\ITilde_j} C((\tilde{s}_j, t), (\tilde{s}_{j'}, t'); \phi) dt dt' \;, \\
    (C_{\lTilde, L}(\phi))_{j, k} &= (|\ITilde_j| |I_k| |B_k|)^{-1} \int_{\ITilde_j} \int_{I_k} \int_{B_k} C((\tilde{s}_j, t'), (s, t); \phi) ds dt dt' \;, \\
    (C_{L}(\phi))_{k, k'} &= (|L_{k'}| |L_k|)^{-1} \int_{L_{k'}}\int_{L_k} C(\ell, \ell'; \phi) d\ell d\ell' \;,
\end{split}
\end{equation}
for $j, j' = 1, \ldots, N$, and $k, k' = 1, \ldots, K$, and $|L_k| = |B_k||I_k|$ for each $k$. The joint distribution \eqref{eq:jointGP} provides a unified framework that connects the process across the two spatial-temporal resolutions, enabling tractable posterior predictive inference at spatial-temporal blocks $L_k$ for each $k$, conditional on observed realizations at point-referenced temporal blocks $\lTilde_j, j = 1, \ldots, N$.

\subsection{Conjugate Bayesian hierarchical model}\label{subsec:hier}
Let $\calL = \{L_k: k = 1, \ldots, K\}$ be a fixed set of $K$ spatial-temporal blocks in $\Borel(\calS \times \calT)$, where $L_k$ is of the form $(B_k, I_k)$ with $B_k$ being a block or region in $\calS$ and $I_k$ an interval in $\calT$, for each $k$. Let $Y(\calL) = (Y(L_1), \ldots, Y(L_k))^\T$, which we simply denote by $Y$, be the $K \times 1$ vector of outcomes observed at $\calL$. The key challenge of our model is that we do not observe the covariates that are assumed to explain the outcome at the same spatial-temporal resolution at which $Y$ is observed. Suppose $\tilde{\calL} = \{\lTilde_j: j = 1, \ldots, N\}$, where $\lTilde_j$ is of the form $(\tilde{s}_j, \ITilde_j)$ for each $j$, be the $N$ spatial-temporal coordinates in $\calS \times \Borel(\calT)$, at which we observe the covariate $X(\tilde{\calL}) = (X(\lTilde_1), \ldots, X(\lTilde_N))^\T$, which we simply write as the $N \times 1$ vector $X$. We further assume that $X(\tilde{\calL})$ are noisy measurements of a latent spatial-temporal process on $\calS \times \Borel(\calT)$ derived from a parent process $Z(\ell)$ on $\calD$ following the stochastic integral \eqref{eq:process-def}. This setup is particularly motivated by the asthma and ozone datasets described in Section~\ref{sec:data-description}, where the outcome is measured at the county level and aggregated annually, while the covariate is spatially referenced to points, but aggregated on a monthly scale. 

We build a hierarchical model by jointly modeling both the outcome and the covariate as conditionally dependent on a shared latent spatial-temporal process,
\begin{equation}\label{eq:hier-model}
\begin{split}
    Y(L_k) &= w(L_k)^{\T}\beta_1 + \beta_2 Z(L_k) + \epsilon_k,\quad
    \epsilon_k \ind \Norm \left(0, (|B_k| |I_k|)^{-1} \tau^2\right), k = 1, \ldots, K\;,\\
    (\beta_1^{\T}, \beta_2)^{\T} \given \tau^2 &\sim \Norm(\mu_{\beta}, \tau^2 V_{\beta}), \quad \tau^2 \sim \IG (a_{\tau}, b_{\tau})\;,\\
    X(\lTilde_j) &= Z(\lTilde_j) + e_j,\quad e_j \ind \Norm (0, |\ITilde_j|^{-1} \delta^2  \sigma^2), \quad j = 1, \ldots, N\;,\\
    Z(\ell) \given \gamma, \sigma^2, \phi &\sim \GP (\mu(\ell; \gamma), \sigma^2 C(\cdot, \cdot; \phi)), \quad \mu(\ell; \gamma) = \psi(\ell)^{\T}\gamma \;,\\
    \gamma \given \sigma^2 &\sim \Norm (\mu_\gamma, \sigma^2 V_\gamma), \quad \sigma^2 \sim \IG (a_\sigma, b_\sigma)\;,
\end{split}
\end{equation}
where $w(L_k) = (w_1(L_k), \ldots, w_p(L_k))^{\T}$ denotes a $p \times 1$ vector of predictors, $\beta_1$ is the corresponding $p \times 1$ vector of fixed effects, and, $\beta_2$ is a scalar and denotes the slope corresponding to the covariate $Z(L_k)$, which denotes the latent spatial-temporal process at $L_k$, for each $k$. Instead of $Z(L_K)$, we observe $X(\lTilde_j)$, which are noisy observations of $Z(\lTilde_j)$, which denotes the latent process at a different spatial-temporal resolution from $Z(L_k)$. Note that both $Z(L_k)$ for each $k$, and $Z(\lTilde_j)$ for each $j$, are completely unobserved, and their joint probability law is specified by the assumption of a Gaussian process ($\GP$) $Z(\ell)$ on $\calD$, following \eqref{eq:jointGP}. The mean function of the $\GP$ is characterized by a $r \times 1$ basis $\psi(\ell) = (\psi_1(\ell), \ldots, \psi_r(\ell))^{\T}$, and its corresponding $r \times 1$ slope vector $\gamma$. The purpose of $\psi(\ell)$ is to model a global trend of the spatial-temporal process which may capture an overall mean, seasonal variations, and so on. The term $\epsilon_k$ denotes white noise defined only at the coordinates in $\calL$, and captures the heteroskedasticity arising from inhomogeneity in the volumes of the spatial-temporal blocks at which the outcome is observed. Similarly, $e_j$ captures the independent measurement error for $X(\lTilde_j)$ for each $j$, and are defined only at $\tilde{\calL}$. It is crucial to note that, unlike $Z(\cdot)$, $\epsilon_j$ and $e_k$ are two different white noise processes and do not originate from a parent white noise process on $\calD$. Here, $\delta^2$ is the noise-to-spatial-temporal variance ratio. 

Let $W$ be the $K \times p$ matrix with $(k, u)$-th element $w_k(L_u)$ for $k = 1, \ldots, K$ and $u = 1, \ldots, p$. Collect the regression coefficients into the $(p+1) \times 1$ vector $\beta = (\beta_1^{\T}, \beta_2)^{\T}$. Given $\tau^2$, we place a multivariate Gaussian prior $\Norm (\mu_\beta, \tau^2 V_\beta)$ on $\beta$, and subsequently place an inverse-gamma prior $\IG (a_\tau, b_\tau)$ on $\tau^2$. Similarly, we also place a multivariate Gaussian prior $\Norm (\mu_\gamma, \sigma^2 V_\gamma)$ on $\gamma$ conditional on $\sigma^2$, and then place an inverse-gamma prior $\IG (a_\sigma, b_\sigma)$ on $\sigma^2$.
Define $K \times 1$ vector $Z_L = (Z(L_1), \ldots, Z(L_K))^{\T}$, and $N \times 1$ vector $Z_{\lTilde} = (Z(\lTilde_1), \ldots, Z(\lTilde_N))^{\T}$. 
Hence, we write the joint distribution of the data, latent process and the model parameters $p(Y, X, Z_L, Z_{\lTilde}, \beta, \gamma, \tau^2, \sigma^2)$ as
\begin{equation}\label{eq:joint-dist}
\begin{split}
    &\Norm (Y \given W\beta_1 + \beta_2 Z_L, \tau^2 D_L) \times \Norm (\beta \given \mu_\beta, \tau^2 V_\beta) \times \IG(\tau^2 \given a_\tau, b_\tau)\\
    &\quad \times \Norm(X \given Z_{\lTilde}, \delta^2 \sigma^2 D_{\lTilde}) \times \Norm (Z_{\lTilde} \given \tilde{\Psi}\gamma, \sigma^2 C_{\lTilde}(\phi)) \times \Norm (Z_L \given \mu_{L \given \lTilde}, \sigma^2 C_{L\given \lTilde})\\
    &\quad \quad \times \Norm (\gamma \given \mu_\gamma, \sigma^2 V_\gamma) \times \IG(\sigma^2 \given a_\sigma, b_\sigma) \;,
\end{split}
\end{equation}
where $D_L$ is $K \times K$ diagonal matrix with the $k$th diagonal element $(|B_k||I_k|)^{-1}$ for each $k$, and $D_{\lTilde}$ is $N \times N$ diagonal matrix with the $j$th diagonal element $|\ITilde_j|^{-1}$ for each $j$. Following \eqref{eq:jointGP}, we have
\begin{equation}\label{eq:conditional}
\begin{split}
    \mu_{L \given \lTilde} &= \bar{\Psi} \gamma + C_{\lTilde, L}(\phi)^{\T} C_{\lTilde}^{-1}(\phi)\left(Z_{\lTilde} - \tilde{\Psi} \gamma \right)\quad \mbox{and}\quad C_{L\given \lTilde} = C_L(\phi) - C_{\lTilde, L}(\phi)^{\T} C_{\lTilde}(\phi)^{-1} C_{\lTilde, L}(\phi)\;,
\end{split}
\end{equation}
where $N \times N$ matrix $C_{\lTilde}(\phi)$ and $N \times K$ matrix $C_{\lTilde, L}(\phi)$ are as defined in \eqref{eq:covs}. Furthermore, the $N \times r$ basis matrix $\tilde{\Psi}$ is known and has $(j, v)$-th element $\tilde{\psi}_v(\lTilde_j) = |\ITilde_j|^{-1} \int_{\ITilde_j} \psi_v(s, t) dt$ for $v = 1, \ldots, r$ and $j = 1, \ldots, N$. Similarly, the $K \times r$ basis matrix $\bar{\Psi}$ is known, with $(k, v)$-th element $\bar{\psi}_v(L_k) = (|L_k|)^{-1} \int_{L_k} \psi(\ell) d\ell$ for $k = 1, \ldots, K$ and $v = 1, \ldots, r$. Equation~\ref{eq:joint-dist} factorizes the joint distribution in a way that reflects the hierarchical structure of the data-generative model, as given by \eqref{eq:hier-model}.

\subsection{Model assumptions}\label{subsec:assumptions}
We pursue analytically accessible posterior distribution of all model parameters in \eqref{eq:hier-model} conditional on $\phi$ and $\delta^2$. The auxiliary model hyperparameters $\mu_\beta, V_\beta$, $\mu_\gamma, V_\gamma$, $a_\tau, b_\tau$, $a_\sigma$ and $b_\sigma$ are assumed fixed. Following \eqref{eq:joint-dist}, since $\beta$ and $\tau^2$ are conditionally independent of all model components, given $(Y, Z_L)$, we factorize the posterior distribution as
\begin{equation}\label{eq:posterior1}
    p(Z_L, Z_{\lTilde}, \beta, \gamma, \tau^2,\sigma^2 \given Y, X, \phi, \delta^2) = p(\beta, \sigma^2 \given Z_L, Y) \times p(Z_L, Z_{\lTilde}, \gamma, \sigma^2 \given Y, X, \phi, \delta^2)
\end{equation}
We make one additional assumption on $p(Z_L, Z_{\lTilde}, \gamma, \sigma^2 \given Y, X, \phi, \delta^2)$ as follows.
\begin{assumption}\label{assumption1}
    Under fixed values of $\phi$ and $\delta^2$, the latent spatial-temporal processes $(Z_L, Z_{\lTilde})$ is conditionally independent of the outcome $Y$, given noisy measurements of the covariate $X$, i.e.,
    $$p(Z_L, Z_{\lTilde}, \gamma, \sigma^2 \given Y, X, \phi, \delta^2) = p(Z_L, Z_{\lTilde}, \gamma, \sigma^2 \given X, \phi, \delta^2).$$
\end{assumption}
This implies that, instead of estimating a Bayesian full probability model, we assume that the latent spatial-temporal processes $(Z_L, Z_{\lTilde})$ are a priori dependent, but once $X$ is observed, $Y$ provides no additional information about them. In the context of our analysis, Assumption~\ref{assumption1} appears quite natural since it dictates that the estimation of annual county-level ozone concentrations only depends on $X$, i.e., the monthly ozone levels measured by the monitoring stations, and does not depend on asthma ED visit rates $Y$. In the literature of Bayesian inference for complex hierarchical models, this is a familiar approach, especially when there are multiple data sources that provide information about different parameters in the model \citep[see, for e.g.,][]{Bayarri2009, jacob2017}. This is commonly known as \emph{modularization}, as Assumption~\ref{assumption1} modulates the flow of information from observed data to the latent process. Moreover, our proposed framework can also be viewed as a \emph{cut model} \citep{Plummer2014}, since it mimics a cut in the directed acyclic graph representing the hierarchical model, separating the graph into two components by logically preventing the ``feedback'' from one part of the model to the other during inference. We illustrate this in Figure~\ref{fig:DAG}.
\begin{figure}[t]
    \centering
    \includegraphics[width=0.75\linewidth]{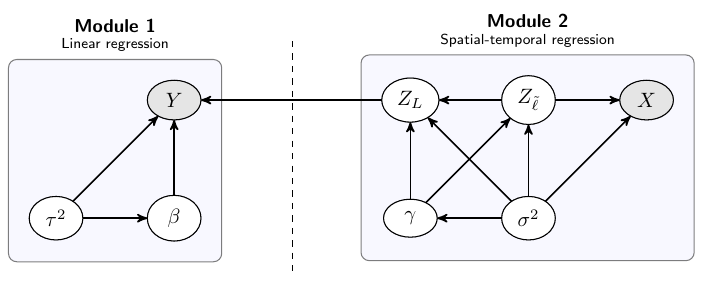}
    \caption{Directed acyclic graph (DAG) illustrating the conditional dependence structure of the hierarchical model \eqref{eq:hier-model}. Nodes shaded in gray correspond to data that are observed; all other nodes represent unobserved (latent or unknown) quantities. The vertical dashed line denotes a ``cut'' in the DAG, indicating restricted flow of information from $Y$ to Module~2 during posterior inference.}
    \label{fig:DAG}
\end{figure}
The separate components of the model are often called \emph{modules}. In this context, the hierarchical model \eqref{eq:hier-model} is made up of two modules -- a linear regression module (Module~1), and a spatial-temporal regression module (Module~2). In a full probability model, the information from the outcome $Y$ ``feeds back'' through the graph in Figure~\ref{fig:DAG} to influence the posterior distribution of the latent spatial-temporal processes $(Z_L, Z_{\lTilde})$. Instead, both $Z_L$ and $Z_{\lTilde}$ are estimated using the auxiliary data $X$. Cut models are also attractive from a computational viewpoint, as they significantly simplify sampling from the posterior distribution. 

\subsection{Posterior distribution}\label{subsec:posterior}
Following Assumption~\ref{assumption1}, conditional on $\phi$ and $\delta^2$, we derive that the posterior distribution corresponding to the hierarchical model in \eqref{eq:hier-model} as
\begin{equation}\label{eq:posterior-final}
\begin{split}
    & \Norm (\beta \given M_{\beta} m_{\beta}, \tau^2 M_{\beta}) \times \IG(\tau^2 \given a_{\tau}^{\ast}, b_{\tau}^{\ast}) \times \Norm (Z_L \given \mu_{L \given \lTilde}, \sigma^2 C_{L\given \lTilde})\\
    & \quad \times \Norm (Z_{\lTilde} \given M_{z} m_{z}, \sigma^2 M_{z}) \times \Norm (\gamma \given M_{\gamma} m_{\gamma}, \sigma^2 M_{\gamma}) \times \IG(\sigma^2 \given a_{\sigma}^{\ast}, b_{\sigma}^{\ast})\;,
\end{split}
\end{equation}
where $a_{\tau}^{\ast} = a_{\tau} + K/2$, $b_{\tau}^{\ast} = b_{\tau} + (Y^{\T} D_{L}^{-1} Y + \mu_{\beta}^{\T} V_{\beta}^{-1} \mu_{\beta} - m_{\beta}^{\T} M_{\beta} m_{\beta}) / 2$, and $a_{\sigma}^{\ast} = a_{\sigma} + N/2$, $b_{\sigma}^{\ast} = b_{\sigma} + (X^{\T} V_{X}^{-1} X + \mu_{\gamma}^{\T} V_{\gamma}^{-1} \mu_{\gamma} - m_{\gamma}^{\T} M_{\gamma} m_{\gamma}) / 2$, with $V_{X} = C_{\lTilde}(\phi) + \delta^2 D_{\lTilde}$, and
\begin{equation}\label{eq:postparams}
\begin{split}
    M_{\beta}^{-1} &= \tilde{W}^{\T} D_{L}^{-1} \tilde{W} + V_{\beta}^{-1}, \quad m_{\beta} = \tilde{W}^{\T} D_L^{-1} Y + V_{\beta}^{-1} \mu_{\beta}\;,\\
    M_{\gamma}^{-1} &= \tilde{\Psi}^{\T} V_{X}^{-1} \tilde{\Psi} + V_{\gamma}^{-1}, \quad m_{\gamma} = \tilde{\Psi}^{\T} V_{X}^{-1} X + V_{\gamma}^{-1} \mu_{\gamma}\;,\\
    M_{z}^{-1} &= C_{\lTilde}(\phi)^{-1} + (1/\delta^2) D_{\lTilde}^{-1}, \quad 
    m_z = (X - \tilde{\Psi}\gamma) / \delta^2\;, \\
\end{split}
\end{equation}
where $\tilde{W} = [W, Z_L]$ being the $K \times (p+1)$ matrix obtained by augmenting $W$ by the vector $Z_L$ on the right. The first two terms in \eqref{eq:posterior-final} denote the conditional posterior distributions $p(\beta \given \tau^2, Z_L, Y)$ and $p(\tau^2 \given Z_L, Y)$, respectively. The third term denote the posterior predictive distribution $p(Z_L \given Z_{\lTilde}, \gamma, \sigma^2, \phi, \delta^2)$, with $\mu_{L \given \lTilde}$ and $C_{L \given \lTilde}$ as defined in \eqref{eq:conditional}. The fourth through sixth terms represent the conditional posterior distributions $p(Z_{\lTilde} \given \gamma, \sigma^2, X, \phi, \delta^2)$, $p(\gamma \given \sigma^2, X, \phi, \delta^2)$, and the marginal posterior distribution $p(\sigma^2 \given X, \phi, \delta^2)$, respectively. This factorization follows from the fact that $Z_L$ is conditionally independent of the data $X$, given $Z_{\lTilde}$, $\gamma$, $\sigma^2$ and $\phi$. It is important to remark that, the analytical tractability of the posterior distribution arises from the assumption that $\phi$ and $\delta^2$ are fixed and due to the condition specified in Assumption~\ref{assumption1}.

We use composition sampling to draw samples from the posterior distribution in \eqref{eq:posterior-final}. We elaborate the steps in Algorithm~\ref{algo:sampling}. Here, $\chol(\cdot)$ refers to the lower-triangular Cholesky factor of a square matrix.%
\begin{algorithm}
\caption{Posterior Sampling}
\label{algo:sampling}

\KwIn{Data: $(Y, X, \tilde{\calL}, \calL)$; candidate $(\phi, \delta^2)$, and hyperparameters.}
\KwOut{Posterior samples of $\{\sigma^{2(b)}, \gamma^{(b)}, Z_{\lTilde}^{(b)}, Z_L^{(b)}, \tau^{2(b)}, \beta^{(b)}\}_{b=1}^B$}

Compute $\chol(V_X)$ given $\phi, \delta^2$\tcp*[r]{$\BigO(N^3)$ flops}

Compute $\chol(M_z)$ using $\chol(V_X)$, and $M_z = \delta^2 D_{\lTilde} V_X^{-1} C_{\lTilde}(\phi)$\tcp*[r]{$\BigO(N^3)$ flops}

Compute $\chol(C_{\lTilde}(\phi))$, then $\chol(C_{L \given \lTilde})$ using \eqref{eq:conditional}\tcp*[r]{$\BigO(KN^2 + K^3)$ flops}

Compute $M_\gamma$, $m_\gamma$, then $a_{\sigma}^{\ast}$, $b_{\sigma}^{\ast}$\tcp*[r]{$\BigO(rN^2 + r^3)$ flops}

\For{$b \gets 1$ \KwTo $B$}{
    Sample $\sigma^{2(b)} \sim \IG(a_{\sigma}^{\ast}, b_{\sigma}^{\ast})$\;

    Sample $\gamma^{(b)} \sim \Norm(M_{\gamma} m_{\gamma}, \sigma^{2(b)} M_{\gamma})$\;

    Compute $m_{z}^{(b)} = (X - \tilde{\Psi} \gamma^{(b)}) / \delta^2$\tcp*[r]{$\BigO(N)$ flops}

    Sample $Z_{\lTilde}^{(b)} \sim \Norm(M_z m_{z}^{(b)}, \sigma^{2(b)} M_z)$ using $\chol(M_z)$\tcp*[r]{$\BigO(N^2)$ flops}

    Compute $\mu_{L \given \lTilde}^{(b)}$ using $Z_{\lTilde}^{(b)}$ and $\gamma^{(b)}$ in \eqref{eq:conditional}\tcp*[r]{$\BigO(KN^2)$ flops}

    Sample $Z_L^{(b)} \sim \Norm (\mu_{L \given \lTilde}^{(b)}, \sigma^{2(b)} C_{L \given \lTilde} )$\tcp*[r]{$\BigO(N)$ flops}

    Set $\tilde{W}^{(b)} = [W, Z_L^{(b)}]$, compute $M_{\beta}^{(b)}$, $m_{\beta}^{(b)}$, $a_{\tau}^{\ast(b)}$, $b_{\tau}^{\ast(b)}$ using \eqref{eq:postparams}\tcp*[r]{$\BigO(K^3)$ flops}

    Sample $\tau^{2(b)} \sim \IG(a_{\tau}^{\ast(b)}, b_{\tau}^{\ast(b)})$\;

    Sample $\beta^{(b)} \sim \Norm(M_{\beta}^{(b)} m_{\beta}^{(b)}, \tau^{2(b)} M_{\beta}^{(b)})$\;
}
\Return $B$ samples $\{\sigma^{2(b)}, \gamma^{(b)}, Z_{\lTilde}^{(b)}, Z_{L}^{(b)}, \tau^{2(b)}, \beta^{(b)} \}_{b = 1}^{B}$ from the posterior distribution \eqref{eq:posterior-final}.
\end{algorithm}
The steps in Algorithm~\ref{algo:sampling}, generate $B$ samples $\{\sigma^{2(b)}, \gamma^{(b)}, Z_{\lTilde}^{(b)}, Z_{L}^{(b)}, \tau^{2(b)}, \beta^{(b)} \}_{b = 1}^{B}$ from the posterior distribution. The sampling algorithm is dominated by Cholesky decompositions of $N \times N$ matrices $V_X$, $M_z$, and $C_{\lTilde}(\phi)$, which accumulates $\BigO(N^3)$ floating-point operations (flops). The matrix $M_z$ is calculated efficiently using the identity $M_z = \delta^2 D_{\lTilde} V_{X}^{-1} C_{\lTilde}(\phi)$, where $V_X^{-1} C_{\lTilde}(\phi)$ is calculated using triangular solves of columns of $C_{\lTilde}(\phi)$ with respect to the Cholesky factor already calculated $\chol(V_X)$. Placing a prior on $\phi$ and the nugget $\theta = \delta^2 \sigma^2$ requires iterative algorithms such as MCMC to sample from the posterior distribution, which entails repeated Cholesky decompositions of these $N \times N$ matrices. Moreover, weak identifiability of $\phi$, $\sigma^2$ and $\theta$ impedes convergence of the random walk Metropolis steps, which contributes to delayed execution times. Thus, even for moderately sized datasets ($N \sim 10^3$), the computation becomes too onerous for practical use and alternative strategies like low-rank models are used \citep{spbayes_large}. In this context, predictive stacking presents itself as an effective alternative by enhancing the practicality of full Gaussian process models for moderately large datasets.
\begin{remark}\label{remark1}
Given observations at $\tilde{\calL}$, let $\calL = \{\ell_i = (s_i, t_i): i = 1, \ldots, N^{\ast} \}$ be a collection of $N^{\ast}$ space-time coordinates in $\mathcal{D}$, where we wish to predict latent spatial-temporal process. Define $N^{\ast} \times 1$ vector $Z_{\ell} = (Z(\ell_1), \ldots, Z(\ell_{N^{\ast}}))^{\T}$. Then, posterior predictive inference at $\calL$ follows from 
\begin{equation}\label{eq:spt_pointpred}
    p(Z_{\ell} \given X, \phi, \delta^2) = \int p(Z_{\ell} \given Z_{\lTilde}, \gamma, \sigma^2, \phi, \delta^2)\, p(Z_{\lTilde}, \gamma, \sigma^2 \given X, \phi, \delta^2)\, dZ_{\lTilde}\, d\gamma \, d\sigma^2 \;,
\end{equation}
where the conditional density $p(Z_{\ell} \given Z_{\lTilde}, \gamma, \sigma^2, \phi, \delta^2) = \Norm (Z_{\ell} \given \mu_{\ell \given \lTilde}, \sigma^2 C_{\ell \given \lTilde})$, with
\begin{equation}\label{eq:conditional2}
\begin{split}
    \mu_{\ell \given \lTilde} = {\Psi} \gamma + C_{\ell, \lTilde}(\phi) C_{\lTilde}^{-1}(\phi)\left(Z_{\lTilde} - \tilde{\Psi} \gamma \right)\;,\quad
    C_{\ell\given \lTilde} = C_{\ell}(\phi) - C_{\ell, \lTilde}(\phi) C_{\lTilde}(\phi)^{-1} C_{\ell, \lTilde}(\phi)^{\T}\;,
\end{split}    
\end{equation}
where $(\mu_\ell(\gamma))_i = \mu(\ell_i; \gamma)$, and $N^{\ast} \times N$ matrix $C_{\ell, \lTilde}(\phi)$, and $N^{\ast} \times N^{\ast}$ matrix $C_{\ell}(\phi)$ are defined as 
\begin{equation*}
    (C_\ell(\phi))_{i,i'} = C(\ell_i, \ell_{i'}; \phi), \quad (C_{\ell, \lTilde}(\phi))_{i, j} = |\ITilde_j|^{-1} \int_{\ITilde_j} C((s_i, t_i), (\tilde{s}_j, t); \phi) dt \;,
\end{equation*}
for $i, i' = 1, \ldots, N^{\ast}$ and $j = 1, \ldots, N$. We sample from \eqref{eq:spt_pointpred} by first drawing $\{Z_{\lTilde}^{(b)}, \gamma^{(b)}, \sigma^{2(b)}\}$ from $p(Z_{\lTilde}, \gamma, \sigma^2 \given X, \phi, \delta^2)$ using Algorithm~\ref{algo:sampling}. Then, for each drawn value of $\{Z_{\lTilde}^{(b)}, \gamma^{(b)}, \sigma^{2(b)}\}$, we sample $Z_{\ell}$ from $\Norm(Z_{\ell} \given \mu_{\ell \given \lTilde}^{(b)}, \sigma^{2(b)} C_{\ell \given \lTilde})$, where $\mu_{\ell \given \lTilde}^{(b)}$ is obtained by substituting $Z_{\lTilde}$ and $\gamma$ by $Z_{\lTilde}^{(b)}$ and $\gamma^{(b)}$, respectively, in \eqref{eq:conditional2}. Repeating this for $b = 1, \ldots, B$ yields samples $\{Z_{\ell}^{(b)}: b = 1, \ldots, B\}$ from \eqref{eq:spt_pointpred}.
\end{remark}

\subsection{Spatial-temporal correlation function}\label{subsec:sptcorr}
As can be seen from \eqref{eq:covs}, evaluation of the elements of $C_{\lTilde}(\phi)$ involves numerical integrations over the temporal domain. Popular methods for univariate numerical integration (e.g., trapezoidal/Simpson's rule, quadrature, Monte Carlo) for a smooth function incurs $\BigO(M)$ operations, where $M$ denotes the number of knots/points at which the integrand is evaluated. Hence, evaluation of all elements of $C_{\lTilde}(\phi)$ would require $\BigO(MN^2)$ flops, which can be computationally expensive for moderately large $N$. Therefore, we explore spatial-temporal correlation functions that offer improved tractability, especially over the temporal domain.

In particular, we assume a separable spatial-temporal correlation function \citep{mardiagoodall1993} of the form $C(\ell, \ell'; \phi) = C_s(s, s'; \phi_s, \nu) \cdot C_t(t, t'; \phi_t)$, where $C_s(s, s'; \phi_s, \nu)$ and $C_t(t, t'; \phi_t)$ denote the isotropic Mat\'ern and the exponential correlation functions, respectively, given by
\begin{equation}\label{eq:corrfn}
\begin{split}
    C_s(s, s'; \phi_s, \nu) &= \frac{\left(\phi_s \lVert s - s' \rVert \right)^{\nu}}{2^{\nu - 1} \Gamma(\nu)} K_{\nu} \left( \phi_s \lVert s - s' \rVert \right) \;,\\
    C_t(t, t'; \phi_t) &= \exp (-\phi_t |t-t'|)\;,
\end{split}
\end{equation}
where $\lVert s - s' \rVert$ is the Euclidean distance between $s, s' \in \calS$. The function $\Gamma (\cdot)$ denotes the gamma function, and $K_{\nu}$ is the modified Bessel function of the second kind of order $\nu$ which may be fractional \citep[Chapter 10]{abramowitzstegun}. Hence, in this case, the process parameter $\phi = (\phi_s, \nu, \phi_t)$ is a 3-dimensional parameter. The assumption of this multiplicative form conveniently separates space and time in calculation of covariance matrices $C_{\lTilde}(\phi)$, $C_{L}(\phi)$, and the cross-covariance matrix $C_{\lTilde, L}(\phi)$. For example, the elements of $C_{\lTilde}(\phi)$ can be rewritten as
\begin{equation}\label{eq:ClTilde}
    (C_{\lTilde}(\phi))_{j, j'} = (|\ITilde_j||\ITilde_{j'}|)^{-1} C_{s}(\tilde{s}_{j}, \tilde{s}_{j'}; \phi_s, \nu) \int_{\ITilde_{j'}} \int_{\ITilde_j} C_t(t, t'; \phi_t) dt dt'\;.
\end{equation}
We make an additional assumption that facilitates closed form expression for the integral in \eqref{eq:ClTilde}.
\begin{assumption}\label{assumption2}
    The temporal blocks $\ITilde_j$ for $j = 1, \ldots, N$ and $I_k$ for $k = 1, \ldots, K$, are of the form $\ITilde_j = (\tilde{a}_j, \tilde{b}_j)$, and $I_k = (a_k, b_k)$ where $\tilde{a}_j < \tilde{b}_j$ for each $j$ and $a_k < b_k$ for each $k$.
\end{assumption}
The distribution theory in Section~\ref{subsec:spt-process} applies to any Borel-measurable subsets $\ITilde_j$ and $I_k$. In practice, however, it is natural to assume that observations are aggregated over a single interval.
\begin{proposition}\label{prop:timeCov}
    For any $a<b$ and $c<d$, suppose $\tilde{C}_t(a, b, c, d; \phi) = \int_{c}^{d} \int_{a}^{b} C_t(t, t'; \phi_t) dt dt'$, then
    \begin{itemize}
        \item[(a)] for non-overlapping $(a, b)$ and $(c, d)$, with $a < b \leq c < d$,
        \[
        \tilde{C}_t(a, b, c, d; \phi_t) = \frac{1}{\phi_t^2} \left[ F(a, d) + F(b, c) - F(a, c) - F(b, d) \right]\;,
        \]
        \item[(b)] if intervals $(a, b)$ and $(c, d)$ overlap, with $a \leq c < b \leq d$,
        \[
        \tilde{C}_t(a, b, c, d; \phi_t) = \frac{1}{\phi_t^2} \left[ 2 \phi_t (b-c) + F(a, d) + F(c, b) - F(a, c) - F(b, d) \right]\;,
        \]
        \item[(c)] if $(a, b)$ is nested within $(c, d)$, i.e. either $c \leq a < b < d$ or $c < a < b \leq d$,
        \[
        \tilde{C}_t(a, b, c, d; \phi_t) = \frac{1}{\phi_t^2} \left[ 2\phi_t (b-a) + F(a, d) + F(c, b) - F(c, a) - F(b, d) \right]\;,
        \]
    \end{itemize}
    where $F(a_1, a_2) = F(a_1, a_2; \phi_t) = \exp \left( -\phi_t (a_2 - a_1) \right)$ for any $a_1, a_2 \in \calT$.
\end{proposition}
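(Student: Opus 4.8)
The plan is to exploit that the integrand $\exp(-\phi_t|t-t'|)$ is smooth everywhere except across the diagonal $t = t'$, where the absolute value switches sign. Away from the diagonal the exponent is linear in $t$ and in $t'$ separately, so $\exp(-\phi_t|t-t'|)$ factorizes as a product of one-dimensional exponentials and the double integral splits into a product of two elementary integrals. The three cases of the statement are precisely the three geometric configurations of the diagonal relative to the rectangle $R = (a,b)\times(c,d)$: in (a) the diagonal misses $R$ entirely, in (b) it clips a corner, and in (c) it cuts all the way across the $t$-range. Throughout I would use the single antiderivative $\int \exp(\pm\phi_t t)\,dt = \pm\phi_t^{-1}\exp(\pm\phi_t t)$ and, at the end, rewrite every surviving exponential $\exp(-\phi_t(a_2-a_1))$ as $F(a_1,a_2)$.

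For part (a), since $a < b \le c < d$ forces $t \le b \le c \le t'$, and hence $t < t'$, on all of $R$, we have $|t-t'| = t'-t$ identically. The integral therefore factorizes as $\big(\int_a^b \exp(\phi_t t)\,dt\big)\big(\int_c^d \exp(-\phi_t t')\,dt'\big)$, each factor being elementary. Expanding the product of the two differences and collecting the four cross terms into the $F(\cdot,\cdot)$ notation yields $\phi_t^{-2}[F(a,d)+F(b,c)-F(a,c)-F(b,d)]$ directly. This case carries no linear term because the diagonal contributes nothing.

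For parts (b) and (c) the diagonal crosses $R$, and the plan is to split the outer integral over $t'$ at the endpoints of the $t$-range. Write the ``crossing band'' as the set of $t'\in(c,d)$ for which $t'\in(a,b)$ as well; this is $(c,b)$ in case (b) and $(a,b)$ in case (c). Outside the band the configuration is locally of type (a) and contributes pure $F$-terms as above. Inside the band, for fixed $t'$ I would split the inner integral at $t = t'$,
\begin{equation*}
\int_a^b \exp(-\phi_t|t-t'|)\,dt = \int_a^{t'}\exp(-\phi_t(t'-t))\,dt + \int_{t'}^b \exp(-\phi_t(t-t'))\,dt = \phi_t^{-1}\big[2 - \exp(-\phi_t(t'-a)) - \exp(-\phi_t(b-t'))\big].
\end{equation*}
The constant ``$2$'' here is the crux: integrating $2\phi_t^{-1}$ over the band (of length $b-c$ in case (b), $b-a$ in case (c)) produces, after rewriting against the overall factor $\phi_t^{-2}$, exactly the linear terms $2\phi_t(b-c)$ and $2\phi_t(b-a)$ that distinguish these formulas from (a). The remaining exponential terms integrate to further $F$-terms, and combining the band contribution with the out-of-band (type-(a)) pieces reassembles the stated expressions.

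The main obstacle is not any single integral — each is a one-line exponential antiderivative — but the \emph{bookkeeping}: correctly identifying the crossing band in each configuration, keeping track of which exponential evaluates to which $F(a_1,a_2)$ under the sign conventions (note in particular that (b) carries $F(a,c)$ while (c) carries the oppositely oriented $F(c,a)$, reflecting on which side the intervals lie), and verifying that the out-of-band and in-band pieces telescope into exactly the four (respectively five) stated terms with no residue. A cleaner alternative, which I would use to cross-check, is additivity of the integral over a partition of $(c,d)$ into sub-intervals abutting $a$ and $b$: this reduces (b) and (c) to a sum of genuine type-(a) rectangles plus a single ``diagonal'' square $(a,b)\times(a,b)$ computed once, so that only the diagonal square carries the linear term and everything else follows from part (a).
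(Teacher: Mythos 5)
Your proposal is correct and follows essentially the same route as the paper's proof: both reduce cases (b) and (c) to off-diagonal rectangles handled as products of one-dimensional exponential integrals (as in (a)) plus a diagonal square treated by splitting the inner integral at $t=t'$, which is exactly where the linear terms $2\phi_t(b-c)$ and $2\phi_t(b-a)$ arise. (One trivial slip in your cross-check remark: the diagonal square for case (b) is $(c,b)\times(c,b)$ rather than $(a,b)\times(a,b)$; your main argument identifies the crossing band correctly.)
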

\begin{proof}
    See Appendix~\ref{app:technical} for details.
\end{proof}
We benefit from the customary Assumption~\ref{assumption2} that the integral of $C_{t}(\cdot, \cdot; \phi_t)$ as appearing in \eqref{eq:ClTilde} admits closed-form expressions, as detailed in Proposition~\ref{prop:timeCov}. This facilitates a rapid evaluation of elements of the spatial-temporal covariance matrix $C_{\lTilde}(\phi)$ without resorting to approximations using numerical methods. However, Proposition~\ref{prop:timeCov} does not extend to similar results for integrations over irregular spatial blocks (e.g., counties of California), and hence numerical methods are essential for computing elements of $C_{L}(\phi)$ and $C_{\lTilde, L}(\phi)$. In this context, assuming the separable spatial-temporal covariance function \eqref{eq:corrfn} provides additional advantage, as we are able to write the integral as a product of a spatial and a temporal component. Following \eqref{eq:covs} and Assumption~\ref{assumption2}, we have
\begin{equation}\label{eq:separableIntegral}
\begin{split}
    (C_{\lTilde, L} (\phi))_{j, k} &= \frac{|B_k|^{-1}}{(\tilde{b}_j - \tilde{a}_j)(b_k - a_k)} \tilde{C}_t(\tilde{a}_j, \tilde{b}_j, a_k, b_k) \int_{B_k} C_s (\tilde{s}_j, s) ds\;,\\
    (C_L(\phi))_{k, k'} &= \frac{(|B_k||B_{k'}|)^{-1}}{(b_k - a_k)(b_{k'} - a_{k'})} \tilde{C}_t (a_k, b_k, a_{k'}, b_{k'}) \int_{B_{k'}} \int_{B_k} C_s(s, s') ds ds'\;,
\end{split}
\end{equation}
for each $j = 1, \ldots, N$ and $k, k' = 1, \ldots, K$. The role of Proposition~\ref{prop:timeCov} becomes clear from \eqref{eq:separableIntegral}, as it helps to simplify \eqref{eq:covs} by reducing the necessity of any numerical integration in the time domain. From a computational perspective, due to Proposition~\ref{prop:timeCov}, evaluation of $C_{\lTilde}(\phi)$ requires absolutely no numerical integration, $C_{\lTilde, L}(\phi)$ requires $NK$ numerical integrals and $C_L(\phi)$ requires $K^2$ numerical integrals. In practice, $K$ is generally much smaller compared to $N$. For example, California has only 58 counties, and we include observations for 5 racial groups over 8 years from 2015 through 2022, which amounts to $K = 1510$ after removal of missing records. On the other hand, we have recorded ozone levels at $N = 15,725$ spatial-temporal coordinates.

\section{Predictive stacking}\label{sec:stacking}
\subsection{Choice of candidate models}\label{subsec:candidate}
To implement stacking, we first fix the values of the hyperparameters of the auxiliary model. In practice, we assume $\mu_\beta = 0_{p+1}$, $V_\beta = \delta_\beta^2 I_{p+1}$ for a sufficiently large $\delta_\beta$ to specify a weakly informative prior for the fixed effects $\beta$. Here, $0_{p+1}$ denotes the zero vector of length $p+1$. Similarly, we assume $\mu_\gamma = 0_r$ and $V_\gamma = \delta^2_{\gamma} I_r$ for the Gaussian prior on $\gamma$. For $\tau^2$ and $\sigma^2$, we choose the shape parameters $a_{\tau} = a_{\sigma} = 2$ and the scale parameters $b_{\tau} = b_{\sigma} = 0.1$. 

For $\phi = (\phi_s, \phi_t, \nu)$ and $\delta^2$, we choose grids of candidate values given by $G_{\phi_s}$, $G_{\phi_t}$, $G_\nu$ and $G_{\delta^2}$. The grid $G_{\delta^2}$ is chosen based on the values of the nugget and partial sill, estimated from an empirical semivariogram. \cite{zhang2024stacking} provides further details on how the quantiles of a Beta distribution specified by the estimated values of the nugget and the partial sill can be used to provide useful information for selecting $G_{\delta^2}$. On the other hand, the candidate values of $\phi_s$ and $\phi_t$ are chosen so that the ``effective range'' (distance at which the correlation drops below 5\%) corresponding to the candidate values is between 20\% and 70\% of the maximum distance between the space-time coordinates \citep[see, Chapter~2][]{banerjee_spatial}. We choose $G_{\nu}$ to include some customary values of the Mat\'ern smoothness parameter that are often used in the literature of spatial analysis. For example, a possible choice is $G_{\nu} = \{0.5, 1.0, 1.5\}$.

\subsection{Stacking algorithm}\label{subsec:stackingalgo}
While we follow the general strategy of stacking predictive densities as proposed in \cite{yao2018using}, our development is distinct in that we adapt the approach to our more complex model structure, which differs significantly from the standard spatial-temporal modeling frameworks considered in the previous literature \citep{zhang2024stacking, pan2025stacking}. In this aspect, Assumption~\ref{assumption1} plays a key role, as it restricts the inference for the latent spatial-temporal process to Module~2, within which we seek to find an optimal way to combine the inference conditional on candidate values of $\phi$ and $\delta^2$. We elaborate below.

Let $\calM = \{M_1, \ldots, M_G\}$ denote a collection of candidate models $G$, where $M_g$ corresponds to fixed values of the parameters $(\phi_g, \delta^2_g)$, for $g = 1, \ldots, G$. Predictive stacking finds a probability distribution $\tilde{p}$ in the class $\mathcal{C} = \{ \sum_{g = 1}^G \alpha_g p(\cdot \given X, M_g) : \sum_{g=1}^G \alpha_g = 1, \alpha_g \geq 0\}$, such that the Kullback-Leibler ($\mathrm{KL}$) divergence between $\tilde{p}(\cdot \given X)$ and $p_t(\cdot \given X)$ is minimized, where $p_t$ denotes the posterior predictive distribution under the true data-generating model. Here, $p(\cdot \given X, M_g)$ denotes the posterior predictive distribution under the candidate model $M_g$, for each $g$. We define the stacking weights $\alpha = (\alpha_1, \ldots, \alpha_G)$ as the solution to the optimization problem
\begin{equation}\label{eq:stack_optim}
\begin{split}
    \max_{\alpha}\quad &\frac{1}{N} \sum_{j = 1}^{N} \log \sum_{g = 1}^{G} \alpha_g \, p\left( X(\lTilde_j) \biggiven X_{-j}, M_g \right)\quad
    \text{subject to}\quad \alpha^{\T} 1_{G} = 1, \ \alpha \in [0, 1]^{G}\;,
\end{split}
\end{equation}
where $X_{-j}$ denotes the data $X$ with the $j$th observation removed, and $p(X(\lTilde_j) \given X_{-j}, M_g)$ denotes the leave-one-out predictive density corresponding to the $j$th observation. This follows from a result that establishes that minimizing $\mathrm{KL} \left(\tilde{p} (\cdot \given X) , p_{t}(\cdot \given X) \right)$ under the constraint $\tilde{p} \in \mathcal{C}$ is asymptotically equivalent to the optimization problem in \eqref{eq:stack_optim} \citep[see, ][]{le2017bayes, clyde2013bayesian}. The optimization task in \eqref{eq:stack_optim} falls into the class of convex problems and can be formulated and solved using suitable modeling tools and solvers. Assumption~\ref{assumption1} implies that the optimal stacking weights can be computed solely based on Module~2, that is, using only $X$. Posterior inference for quantities of interest subsequently proceeds from the ``stacked posterior'',
\begin{equation}
    \tilde{p}(\cdot \given X, Y) = \sum_{g = 1}^{G} \hat{\alpha}_g \, p(\cdot \given X, Y, M_g)\;,
\end{equation}
where $\hat{w}_g$ denotes optimal stacking weights obtained by solving the optimization task in \eqref{eq:stack_optim}. 

An important prerequisite for evaluating the objective function of interest in \eqref{eq:stack_optim} is the computation of the leave-one-out predictive densities. Under the hierarchical model \eqref{eq:hier-model}, the leave-one-out predictive densities admit a closed form, given by
\begin{equation}\label{eq:exactLOOPD}
    p\left( X(\lTilde_j) \biggiven X_{-j}, M_g \right) = \tdist_{2 a_{\sigma, j}^{\ast}} \left( X(\lTilde_j) \biggiven \mu_{g,j \given -j} (\lTilde_j), \left( b_{\sigma, j}^{\ast} / a_{\sigma, j}^{\ast} \right) \sigma_{g, j\given -j} \right)\;,
\end{equation}
where $\tdist_{\rho}(x; m, v^2)$ denotes the location-scale $t$-density with degrees of freedom $\rho$, location $m$ and scale $v$, evaluated at $x$, and the parameters $\mu_{g, j \given -j} (\lTilde_j)$ and $\sigma_{g, j\given -j}$ are given by
\begin{equation*}
\begin{split}
    \mu_{g, j \given -j} (\lTilde_j) &= R_{g, j}^{\T} V_{g,X_{-j}} X_{-j} + H_{g, j}^{\T} M_{\gamma, g, j} m_{\gamma, g, j}\;,\\
    \sigma_{g, j\given -j} &= V_{g, X_j \given X_{-j}} + H_{g, j}^{\T} M_{\gamma, g, j} H_{g, j}\;,
\end{split}
\end{equation*}
where $R_{g, j}$ is the $(N-1) \times 1$ spatial-temporal cross-correlation matrix between $\tilde{\calL} \setminus {\lTilde_j}$ and $\{\lTilde_j\}$ under $M_g$, and is given by the $j$th column of $C_{\lTilde, -j}(\phi_g)$ which denotes the matrix $C_{\lTilde}(\phi_g)$ with both its $j$th row and column removed. Moreover, the matrix $V_{g, X_{-j}}$ is obtained by removing the $j$th row and column of $V_{g, X} = C_{\lTilde}(\phi_g) + \delta^2_g D_{\lTilde}$. In addition, $M_{\gamma, g, j}^{-1} = \tilde{\Psi}_{-j}^{\T} V_{g, X_{-j}} \tilde{\Psi}_{-j} + V_{\gamma}^{-1}$, and $m_{\gamma, g, j} = \tilde{\Psi}_{-j} V_{g, X_{-j}} X_{-j} + V_{\gamma}^{-1} \mu_{\gamma}$ where the $(N-1) \times r$ matrix $\tilde{\Psi}_{-j}$ is obtained by deleting the $j$th row of $\tilde{\Psi}$. The $r \times 1$ vector $H_{g, j} = \tilde{\psi}(\lTilde_j) - \tilde{\Psi}_{-j}^{\T} V_{g, X_{-j}}^{-1} R_{g, j}$, and the scalar $V_{g, X_j \given X_{-j}} = (V_{g, X})_{j,j} - R_{g, j}^{\T} V_{g, X_{-j}} R_{g, j}$ with $(V_{g, X})_{j,j}$ denoting the $j$th diagonal element of $V_{g, X}$. Furthermore, $a_{\sigma, g}^{\ast} = a_{\sigma} + (N-1)/2$, and $b_{\sigma, g}^{\ast} = b_\sigma + (X_{-j}^{\T} V_{g, X_{-j}} X_{-j} + \mu_{\gamma}^{\T} V_{\gamma}^{-1} \mu_{\gamma} - m_{\gamma, g, j}^{\T} M_{\gamma, g, j} m_{\gamma, g, j})/2$. 

Evaluation of \eqref{eq:exactLOOPD} is dominated by the Cholesky decomposition of the $(N-1)\times(N-1)$ matrix $V_{g, X_{-j}}$, which requires $\BigO(N^3)$, for each $j$. Hence, a naive approach to find the leave-one-out predictive densities under each $M_g$ results in $\sim \BigO(N^4)$ flops, which is impractical. We mitigate this issue by reusing the Cholesky factor of $V_{g, X}$ which has already been computed once while fitting the model $M_g$ \citep{kimcox2002_CV}. Subsequently, we compute the Cholesky factor of $V_{g, X_{-j}}$ for each $j$ using an efficient rank-one update algorithm \citep{krauseIgel2015}, which ultimately accumulates to $\BigO(N^3)$ flops, thereby delivering a significant speedup over the naive approach.

Although rank-one updates are faster than the naive method, they are computationally expensive; hence, an alternative approach is desirable. Here, importance weighting is an attractive option to approximate leave-one-out predictive densities \citep[see, for e.g.,][]{gelfand92_IS, PSIS_vehtari24}. Suppose $\{Z_{\lTilde, g}^{(b)}, \sigma_{g}^{2(b)}\}_{b = 1}^{B}$ denotes $B$ draws from the posterior distribution $p(Z_{\lTilde}, \sigma^2 \given X, M_g)$, then for each $j$, we approximate the leave-one-out predictive densities by the weighted mean
\begin{equation}\label{eq:psis}
    p\left( X(\lTilde_j) \biggiven X_{-j}, M_g \right) \approx \frac{1}{\sum_{b = 1}^{B} r_{j, g}^{b}} \sum_{b = 1}^{B} r_{j, g}^{b} \Norm \left( X(\lTilde_j) \biggiven Z_{\lTilde, g, j}^{(b)}, \sigma_{g}^{2(b)} \right) \;,
\end{equation}
where $Z_{\lTilde, g, j}^{(b)}$ denotes the $j$th element of $Z_{\lTilde, g}^{(b)}$, and $r_{j, g}^{b}$ is the important ratio defined as $1/r_{j, g}^{b} = \Norm(X(\lTilde_j) \given Z_{\lTilde, g, j}^{(b)}, \sigma_{g}^{2(b)})$. The weights $r_{j, g}^{b}$ tend to have a high or infinite variance, introducing instability in the computation \eqref{eq:psis}. To address these issues, \cite{LOOCV_vehtari17} proposes stabilizing the weights by fitting a generalized Pareto distribution to the tail of the weight distribution using the empirical Bayes estimation algorithm proposed in \cite{zhang2009_pareto}. Thus, no additional model fitting is necessary to calculate leave-one-out predictive densities. Compared to $\BigO(N^3)$ for the Cholesky factor update algorithm, the computational cost of this approximate method is $O(N \log N)$ and therefore is much faster. 
Figure~\ref{fig:exact-psis} in the Appendix compares exact and approximate leave-one-out predictive densities computed using the closed-form expression \eqref{eq:exactLOOPD} and Pareto smoothed importance sampling, respectively, for a spatial regression model. Both methods deliver practically indistinguishable results but differ in computational cost; exact computation requires roughly 3~minutes, whereas importance sampling needs only 0.1~second. We implement this using the \textsf{R} package \texttt{loo} \citep{loo_pkg}. 

\section{Simulation}\label{sec:simulation}
\subsection{Simulated data}\label{sec:sim-data}
To address the analytical challenges posed by data that are simultaneously spatially and temporally misaligned, we designed a simulation study that closely replicates the structure and characteristics of the outcome (asthma) and the exposure (ozone) data, as discussed in Section~\ref{sec:data-description}. For the exposure data, we consider the unit square, i.e., $[0, 1]^2$, and the interval $[0, 12]$ as the spatial and temporal domains of interest. The interval $[0, 12]$ represents the duration of twelve months of a year. We simulate spatially-temporally correlated data at $n_s = 100$ locations sampled uniformly in $[0, 1]^2$, and at $n_t = 360$ equally spaced time points in $[0, 12]$, which amounts to spatial-temporal coordinates $\{\ell_1, \ldots, \ell_{n_{s}n_{t}}\}$ with $n_s n_t = 36,000$. The data is simulated following the model $X(\ell_i) = Z(\ell_i) + e'_i$ for $i = 1, \ldots, n_s n_t$, where $e'_i \ind \Norm(0, 1)$ denotes the measurement error for each $i$ and $Z(\ell) \sim \GP(\mu(\ell), C(\ell, \ell'; \phi))$. The covariance function $C(\ell, \ell'; \phi)$ is the same as \eqref{eq:corrfn} with $\phi_s = 4$, $\nu = 0.5$ and $\phi_t = 0.6$. We choose the mean function $\mu(\ell)$ to depend only on time and therefore write it as $\mu(t)$. We model $\mu(t)$ as a random draw from $\GP(5, 4\cdot R(t, t'; p, \lambda))$, where $R(t, t'; p, \lambda) = \exp(-2 \lambda^2 \sin^2\left(\pi |t - t'| / p\right))$ denotes a periodic covariance kernel with period $p = 7$ and decay $\lambda=0.1$. This helps introduce seasonal variations in the simulated data. Thus, the simulated data $X(\ell_i)$ for $i = 1, \ldots, n_s n_t$ mimic daily measurements of a variable at 100 monitoring sites over a year. From these, we compute monthly averages at each site, which we denote by $X(\lTilde_j)$ for $j = 1, \ldots, 12 n_s$. To introduce missingness, we randomly remove 10\% of the data from the monthly averages and denote the remaining data simply by $X$. We proceed to our simulation experiment with these point-referenced monthly averages as the data. See Figures~\ref{fig:sim-oz}A~and~\ref{fig:sim-oz}B for a visualization of the simulated data.
\begin{figure}[t]
    \centering
    \includegraphics[width=0.95\linewidth]{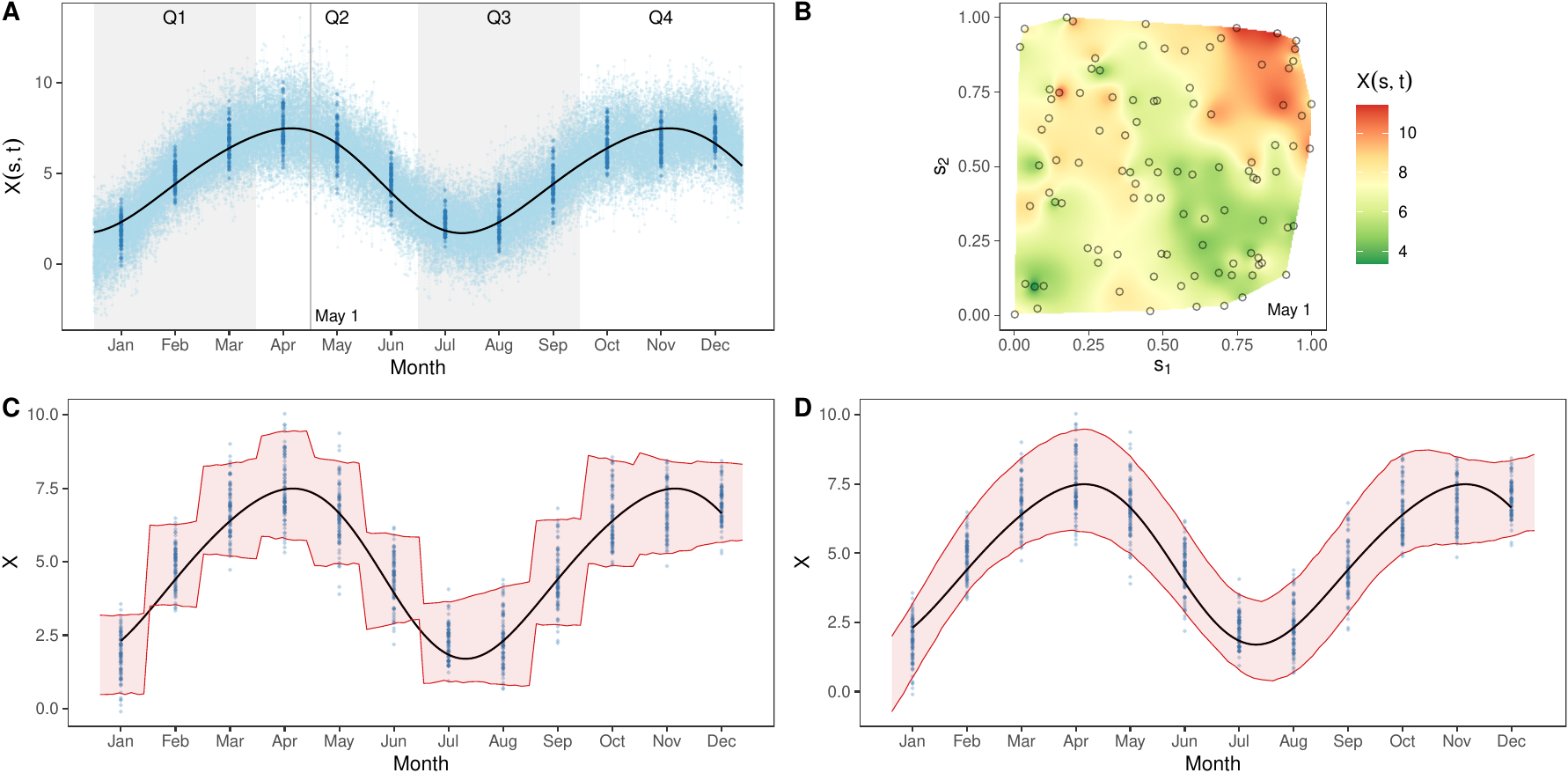}
    \caption{{\sffamily A}: Light blue dots represents simulated daily measurements at 100 locations over a year, whereas dark blue dots denote monthly averages at each site. {\sffamily B}: Interpolated spatial surface of a snapshot of the data (denoted by the vertical line in subfigure~{\sffamily A}) at May 1. {\sffamily C}, {\sffamily D}: 95\% credible interval of posterior predictive samples drawn from the stacked posterior $\tilde{p}(Z(\ell) \given X)$ with $\psi(\cdot)$ chosen as monthly factors and Fourier series as basis, respectively, to capture seasonal variations.}
    \label{fig:sim-oz}
\end{figure}
For subsequent analysis, we consider the grids of candidate values of the spatial decay parameter $G_{\phi_s} = \{2, 3, 5\}$, $G_{\phi_t} = \{0.3, 0.5, 1\}$, $G_{\nu} = \{0.5, 1, 1.5\}$, and $G_{\delta^2} = \{0.75, 1.5\}$. Hence, we stack on $3 \times 3 \times 3 \times 2 = 54$ models.

\subsection{Point-level spatial-temporal prediction}\label{sec:sim-pred1}
As a first step, we examine the stacked posterior predictive distribution $\tilde{p}(X(\ell) \given X)$, which corresponds to predicting the underlying process at a finer temporal resolution (e.g., daily) from coarser, aggregated observations (see Remark~\ref{remark1}). In this context, $\psi(\cdot)$ in the mean function $\psi(\cdot)$ of the latent spatial-temporal process plays a key role. Under the assumption that $\mu(\ell) = \psi(t)^{\T} \gamma$ depends only on time, we study the two alternative specifications of the mean function
\begin{equation}\label{eq:basis}
\begin{split}
    \mu^{(1)}(t) = \sum_{v = 1}^r \psi^{(1)}_v(t)\gamma_v &= \gamma_1 + \sum_{v = 2}^r  \gamma_{v} \mathsf{1}(t \in \mathrm{month}_v)\;,\\
    \mu^{(2)}(t) = \sum_{v = 1}^r \psi^{(2)}_v(t)\gamma_v &= \gamma_1 + \sum_{v = 1}^{\lfloor r/2 \rfloor} \gamma_{2v} \sin(2\pi t/p_v) + \gamma_{2v+1} \cos(2\pi t/p_v)\;,
\end{split}
\end{equation}
where $\mathsf{1}(\cdot)$ denotes an indicator function, $\mathrm{month}_{v}$ denotes the interval $(v-1, v)$ for $v = 2, \ldots, 12$. For example, $v = 2$ corresponds to the month February. In this case, $r = 12$ and $\psi^{(1)}(t)$ corresponds to a simple monthly indicator basis function, which corresponds to monthly factors under monthly aggregated data. On the other hand, $\mu^{(2)}(t)$ consists of smooth periodic basis functions, such as sine and cosine terms with different periodicity. For our analysis, we choose $r = 9$ with $p_v = 3+v$ for $v = 1, \ldots, 4$. Figures~\ref{fig:sim-oz}C~and~\ref{fig:sim-oz}D illustrate the behavior of the posterior predictive distribution under the two choices of $\psi(\cdot)$. We notice that the smooth basis functions lead to superior reconstruction of the latent process compared to discontinuous monthly indicator functions, due to their ability to borrow strength across adjacent time points and capture underlying seasonal trends more effectively.

\subsection{Block-level spatial-temporal prediction}\label{sec:sim-pred2}
Next, we study prediction of the latent process at spatial-temporal blocks. Based on monthly data available at the 100 locations, our aim is to predict quarterly aggregated data in a collection of spatial blocks, instead of points. Here, a quarter corresponds to consecutive three-month periods, as shown by alternating shaded regions in Fig.~\ref{fig:sim-oz}A. We divide the spatial domain, the unit square $[0, 1]^2$, for the four quarters into 40, 50, 30, and 60 irregular spatial blocks, respectively, obtained by a Voronoi tessellation based on the same number of randomly sampled points, using the \textsf{R} package \texttt{deldir} \citep{deldir_pkg}. 
Given the simulated data, we seek the posterior predictive distribution of the aggregated latent process quarterly in these spatial blocks throughout the quarters, based on the stacked posterior $\tilde{p}(Z_L \given X)$. We assume that the mean function $\mu(\cdot)$ is characterized by the monthly indicator basis function $\psi^{(1)}$, as defined in \eqref{eq:basis}. Figure~\ref{fig:sim-COS} compares the quarterly aggregated true detrended spatial-temporal surface $Z(\ell) - \mu(\ell)$, with which the data were simulated, and its corresponding posterior median in the spatial blocks.
\begin{figure}[t]
    \centering
    \includegraphics[width=0.95\linewidth]{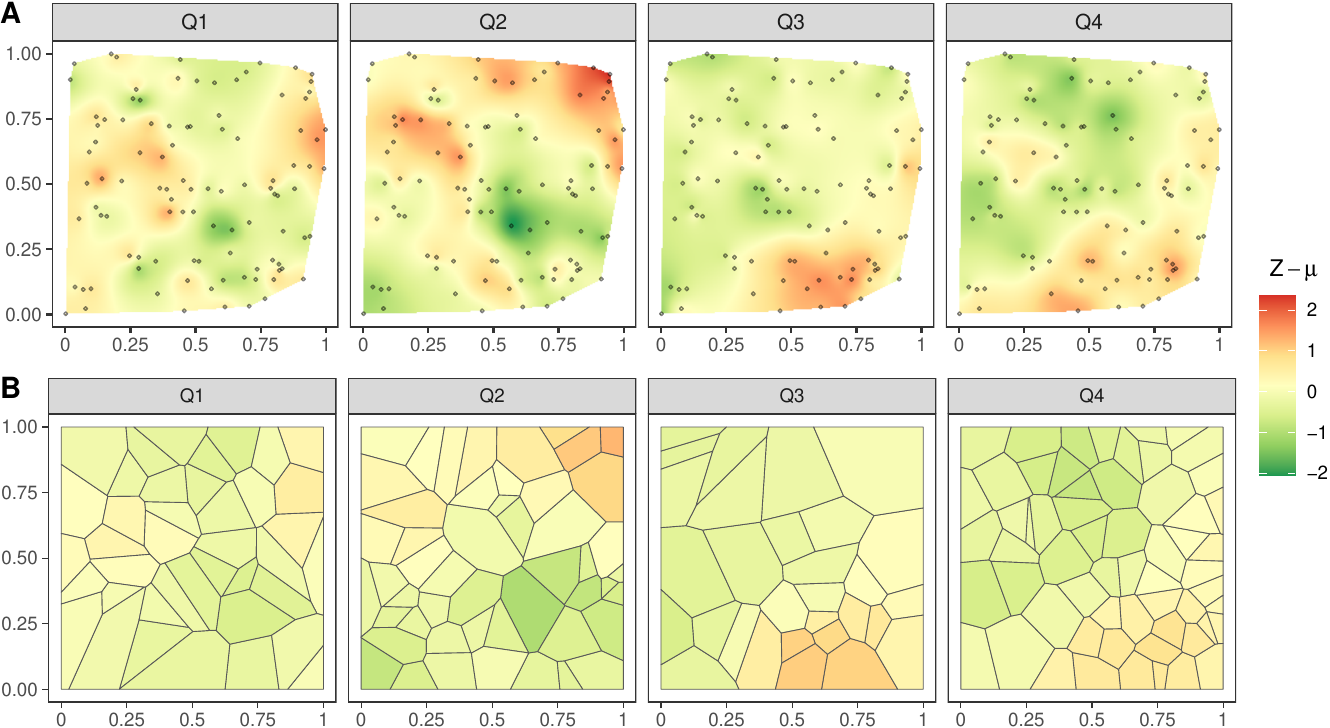}
    \caption{\textsf{A}: Interpolated spatial surfaces of the quarterly averaged de-trended true spatial-temporal process which simulated the data; \textsf{B}: Median of the stacked posterior predictive distribution $\tilde{p}(Z_L - \mu_L \given X)$ at the target spatial-temporal blocks.}
    \label{fig:sim-COS}
\end{figure}
We notice that the block-level posterior distributions closely reflect the true spatial-temporal pattern, indicating that our proposed model accurately captures the underlying process. This further demonstrates the flexibility of our proposed hierarchical model \eqref{eq:hier-model}, as it accommodates complicated survey designs with time-varying spatial blocks.

\subsection{Model comparison}\label{sec:model-comp}
We evaluate and compare the predictive performance of our proposed model against some alternative approaches. For alternative models, we regress the result $Y(L_k)$ for each $k$, as defined in \eqref{eq:hier-model}, on the predictors $w(L_k)$ and $X(L_k)$. Here, $X(L_k)$ denotes the value of the covariate aggregated in the spatial-temporal block $L_k$ and is distinct from the latent process $Z(L_k)$. We use available off-the-shelf spatial interpolation tools to estimate $X(L_k)$. More specifically, we consider two competing approaches - multilevel B-spline approximation (\textsf{MBA}), implemented using the \textsf{R} package \texttt{MBA} \citep{mba_pkg}, and, spatial kriging (\textsf{kriging}) from the \textsf{R} package \texttt{geoR} \citep{geoR_pkg}. Unlike our proposed method, \textsf{MBA} and \textsf{kriging} proceed by first obtaining point estimates in a fine grid in the spatial domain and then calculating averages in each spatial block, completely ignoring the uncertainty surrounding the estimation procedure. In addition, none of the alternative methods account for temporal correlation in the observed data, and temporal aggregation at the quarterly level is achieved by simply averaging over monthly observations. 

The predictive accuracy for each model is measured using the widely applicable information criterion, WAIC \citep{waic2010, gelman2013}. We evaluated WAIC for each model on synthetic data, which are simulated following the steps described in Section~\ref{sec:sim-data} with the exception that we now also sample the latent process corresponding to the quarterly averages in spatial blocks, using \eqref{eq:jointGP}, where integrated covariance kernels are approximated through Monte Carlo integration with 500 within-polygon samples. Subsequently, we sample an outcome $Y$ following the model described in \eqref{eq:hier-model} with $p = 2$, and the covariate $w(\cdot)$ containing an intercept and a predictor sampled from the standard normal distribution, $\beta_1 = (5, 1)^{\T}$, $\beta_2 = -1$ and $\tau^2 = 5$. From Table~\ref{tab:waic}, we notice that all methods deliver similar predictive performance, with our proposed method slightly better.
\begin{table}[t]
    \centering
    \begin{tabular}{cccccc}
        \toprule
        \multirow{2}{*}{Method} & \multirow{2}{*}{\begin{tabular}[c]{@{}c@{}}Model-based\\ UQ \end{tabular}} & \multirow{2}{*}{\begin{tabular}[c]{@{}c@{}}Temporal\\ Dependence \end{tabular}} & \multirow{2}{*}{WAIC} & \multirow{2}{*}{$n_{\text{grid}}$} & \multirow{2}{*}{\begin{tabular}[c]{@{}c@{}}Run Time\\ (in secs) \end{tabular}}\\ \\ \midrule
        \multirow{3}{*}{\textsf{MBA}} & \multirow{3}{*}{\xmark} & \multirow{3}{*}{\xmark} & 1714.80 & 200 & 281\\ 
        & & & 1714.81 & 100 & 71\\ 
        & & & 1714.85 & 50 & 17\\ \cmidrule{1-6}
        \multirow{3}{*}{\textsf{kriging}} & \multirow{3}{*}{\xmark} & \multirow{3}{*}{\xmark} & 1714.75 & 200 & 326\\ 
        & & & 1714.73 & 100 & 81\\ 
        & & & 1714.75 & 50 & 20\\ \cmidrule{1-6}
        \textsf{Stacking} & \cmark & \cmark & 1714.53 & -- & 250\\
         \bottomrule
    \end{tabular}
    \caption{Comparison of predictive performance of our proposed method based on stacking of predictive densities with the alternative approaches. Here, $n_{\text{grid}}$ refers to the number of points along each axis at which spatial interpolation is performed to estimate block-level observations; UQ: Uncertainty quantification.}
    \label{tab:waic}
\end{table}
Moreover, the execution times for both \textsf{MBA} and \textsf{kriging} depends on the grid resolution. Also, contrary to common assumption, increasing the grid resolution has little impact on predictive accuracy, probably because finer resolution spatial interpolation contributes little when they are aggregated to obtain block-level estimates. Executing the stacking algorithm corresponds to fitting 54 candidate models in parallel across 6 cores. Hence, in addition to offering competitive predictive performance with reasonable run-time, a crucial advantage of our framework over other methods is the ability to deliver fully model-based uncertainty quantification for the latent spatial-temporal process at any arbitrary location and time.

\section{Data analysis}\label{sec:data-analysis}
We study the effects of ozone concentration on asthma-related ED visits by first estimating the spatial-temporal regression module based on available monthly ozone concentration data at 15,725 space-time coordinates. We use the periodic Fourier basis function, as given by $\psi^{(2)}(\cdot)$ in \eqref{eq:basis}. Guided by visual inspection of the temporal trends in ozone measurements (see Figure~\ref{fig:oz-panel} in the Appendix), we consider periods of 5, 6, 7, and 12. We consider the separable spatial-temporal covariance kernel \eqref{eq:corrfn}. For candidate values of $\phi$, we choose $G_{\phi_s} = \{0.3, 0.5, 1\}$, $G_{\nu} = \{0.5, 1, 1.5\}$, and $G_{\delta^2} = \{1.5, 2\}$. Hence, we stack on the 54 models obtained by the Cartesian product of each of these grids. Once we obtain optimal stacking weights based on 1000 posterior samples of $\{Z_{\lTilde}, \gamma, \sigma^2\}$, we perform posterior predictive inference for ozone concentrations at 20,000 unique space-time coordinates, comprising 100 randomly sampled locations within the convex hull of monitoring sites at 200 time points spanning 2015–2022. Figure~\ref{fig:oz-panel-pred} displays the 95\% posterior predictive credible intervals for the latent spatial-temporal process at these space-time locations, overlaid on the observed ozone measurements. 
\begin{figure}[t]
    \centering
    \includegraphics[width=0.95\linewidth]{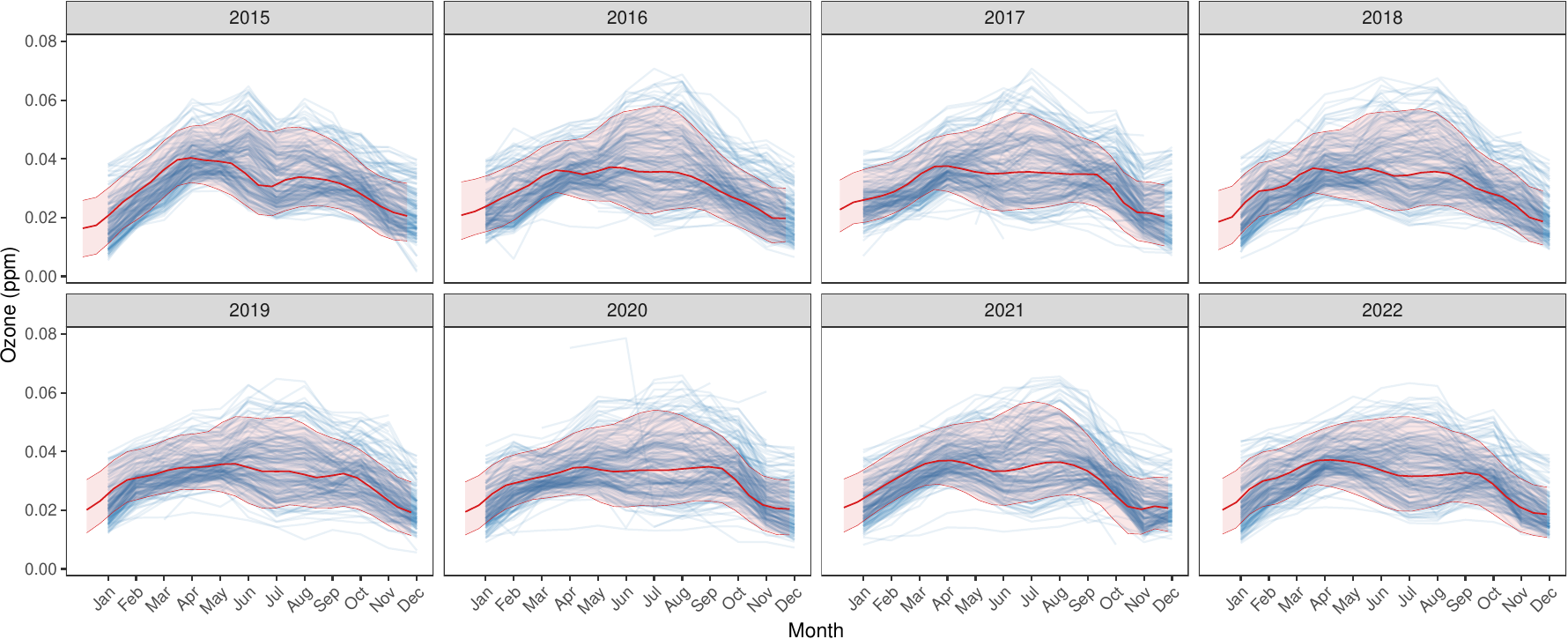}
    \caption{95\% credible intervals (in \emph{red}) of ozone concentration predictions obtained from the stacked posterior $\tilde{p}(Z(\ell) \given X)$, using a periodic Fourier basis mean.}
    \label{fig:oz-panel-pred}
\end{figure}
The uncertainty band is estimated from the stacked posterior $\tilde{p}(Z(\ell) \given X)$. We notice that the temporal trend is captured reasonably well.

\begin{figure}[t]
    \centering
    \includegraphics[width=0.95\linewidth]{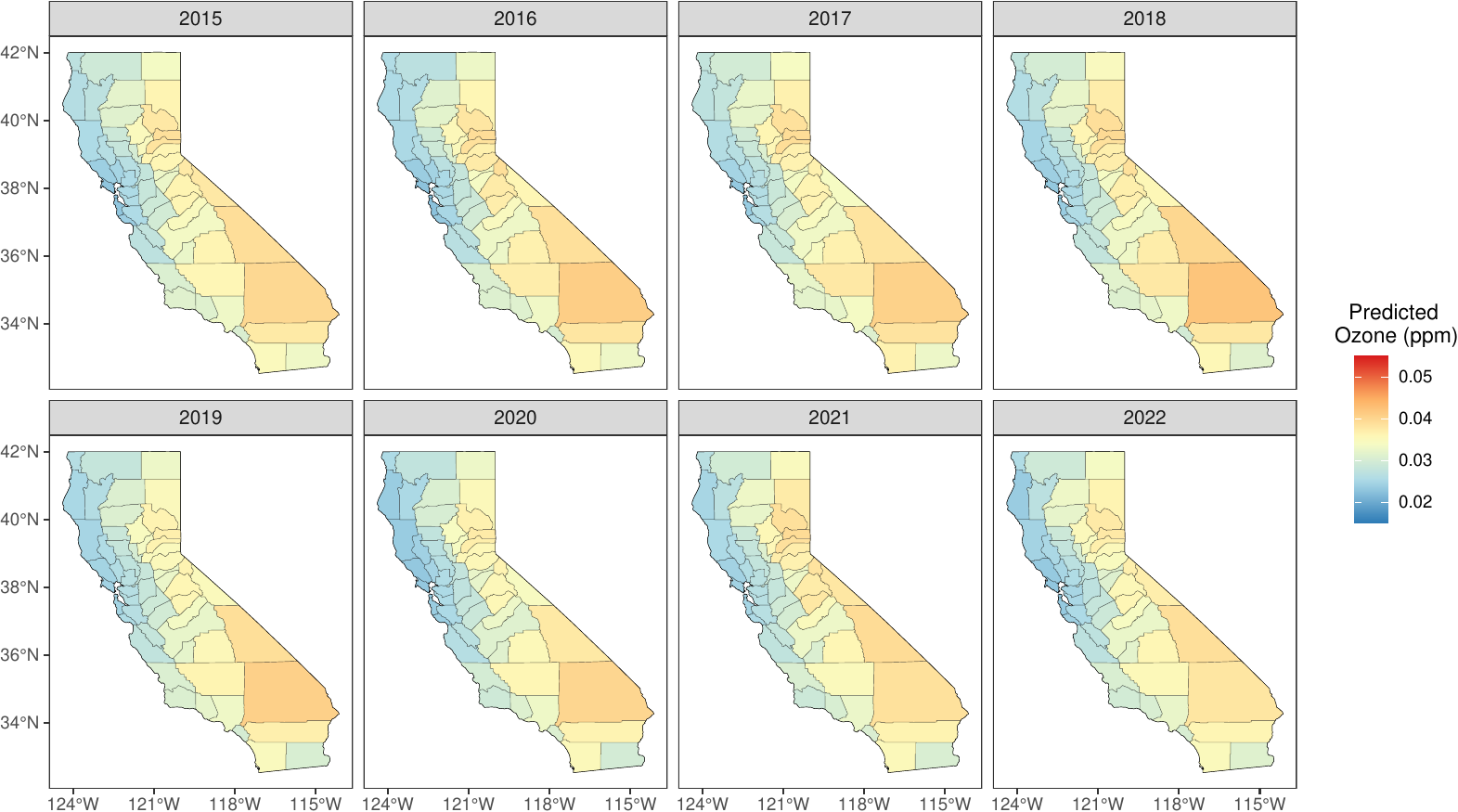}
    \caption{Posterior median of annual ozone concentration predictions at counties of California, obtained from monthly point-referenced observations.}
    \label{fig:oz-county}
\end{figure}
Next, we obtain annual county-level ozone concentration predictions based on the stacked posterior $\tilde{p}(Z(L) \given X)$. Figure~\ref{fig:oz-county} shows the posterior median of annual ozone concentration predictions in each county. The spatial patterns closely resemble those in Figure~\ref{fig:oz-surface}, which shows interpolated surfaces from aggregated monthly monitoring data. However, the pronounced ozone hotspots (in the counties of Riverside, San Bernadino, Inyo, Sierra, Plumas, El Dorado) and the low-concentration regions (e.g., Humboldt, Mendocino, Sonoma, Marin, San Francisco, San Mateo, Santa Cruz) in Figure~\ref{fig:oz-surface} are somewhat less pronounced in their corresponding counties in Figure~\ref{fig:oz-county}. As anticipated, spatial averaging dampens local fluctuations, resulting in smoother estimates that may blur sharp variations in the observed data; nevertheless, the overall pattern of higher inland regions remains largely preserved. We use these posterior predictive samples of annual county-level ozone concentrations to estimate its effect on the asthma-related ED visits.

Subsequently, we estimate a linear regression module with the objective of studying the effect of ozone, race/ethnicity, and time, on age-adjusted asthma-related emergency department visit rates. For clarity, we reformulate the linear model in \eqref{eq:hier-model} using symbolic notation as follows. Suppose $Y_{ijt}$ denotes the age-adjusted asthma-related ED visit rate (per 10,000) for county $i$, race/ethnicity $j$ and year $t$, for $i = 1, \ldots 58$, $j = 1, \ldots, 5$ and $t = 1, \ldots, 8$. We consider the log-linear model
\begin{equation}\label{eq:lm}
    \log (Y_{ijt}) = \beta_0 + \beta_1 \cdot \textsf{Race}_{j} + \beta_2 \cdot \textsf{Ozone}_{it} + \beta_3 \cdot \textsf{Year}_t + \epsilon_{ijt}\;, \quad \epsilon_{ijt} \ind \Norm(0, |B_{i}|^{-1} \tau^2)
\end{equation}
where $\beta_0$ is the intercept, $\textsf{Race}_j$ is a $4 \times 1$ dummy-encoded vector representing the $j$th racial group, $\beta_1 = (\beta_{11}, \beta_{12}, \beta_{13}, \beta_{14})^{\T}$ is the corresponding $4 \times 1$ vector of regression coefficients, $\textsf{Ozone}_{it}$ is the estimated ozone concentration for county $i$ in year $t$, $\textsf{Year}_t = t$ for each $t$, $|B_i|$ is the area of county $i$, and $\epsilon_{ijt}$ denote independent measurement errors. We assume a hierarchical model surrounding \eqref{eq:lm} as given by \eqref{eq:hier-model}. We assume the Gaussian prior $\beta \given \sigma^2 \sim \Norm(0, \sigma^2 V_\beta)$ with $V_\beta = 10^3 I_7$, where $\beta = (\beta_0, \beta_1^{\T}, \beta_2, \beta_3)^{\T}$, and place an inverse gamma prior $\sigma^2 \sim \IG (0.01, 0.01)$. 

\begin{table}[t]
    \centering
    \begin{tabular}{llccl}
    \toprule
       \multirow{2}{*}{Parameter} &  \multirow{2}{*}{Effect} & \multirow{2}{*}{\begin{tabular}[c]{@{}c@{}}Posterior\\ median \end{tabular}} & \multirow{2}{*}{\begin{tabular}[c]{@{}c@{}} 95\% credible\\ interval \end{tabular}} & \multirow{2}{*}{Details} \\ \\
    \midrule
    $\beta_0$  & Intercept & 3.53 & (3.49, 3.58) & \\
    \cmidrule{1-5}
    & \emph{Reference group:} White & & &\\
    $\beta_{11}$ & American Indian/ Alaskan native & 0.24 & (0.16, 0.31) & \\
    $\beta_{12}$ & Asian/ Pacific Islander & -0.69 & (-0.76, -0.62) & \\
    $\beta_{13}$ & Black & 1.30 & (1.24, 1.36) & \\
    $\beta_{14}$ & Hispanic & 0.004 & (-0.05, 0.06) & \\
    \cmidrule{1-5}
    $\beta_2$ & Ozone (per 0.005 ppm) & -0.025 & (-0.05, 0.00) & \emph{baseline:} 0.03 ppm\\
    $\beta_3$ & Year & -0.10 & (-0.11, -0.09) & \emph{baseline:} 2015\\
    $\sigma^2$ & Error variance & 0.13 & (0.12, 0.14) & \\
    \bottomrule
    \end{tabular}
    \caption{Asthma and ozone data 2015-22: posterior summary of model parameters.}
    \label{tab:posterior_summary}
\end{table}

We present a comprehensive summary of the posterior distributions of the regression coefficients in Table~\ref{tab:posterior_summary}.
We find that all regression coefficients except $\beta_{14}$ have 95\% posterior credible intervals that exclude zero, suggesting strong evidence of meaningful differences in contributions specific to each racial group, except that the Hispanic and White groups do not differ significantly in their effects on asthma-related health emergencies, for a given year and ozone level. For better understanding, we examine the relative effect sizes corresponding to the racial groups of white ($e^{\beta_0}$), American Indian/ Alaskan native ($e^{\beta_0 + \beta_{11}}$), Asian/ Pacific Islander ($e^{\beta_0 + \beta_{12}}$), black ($e^{\beta_0 + \beta_{13}}$), and Hispanic ($e^{\beta_0 + \beta_{14}}$) racial groups, which reflects the expected rates of asthma-related ED visits for each racial group in the reference year 2015, when exposed to the baseline level of ozone of 0.03 ppm. Figure~\ref{fig:effects}A illustrates the posterior distributions of the relative effects sizes of each racial group.
\begin{figure}[t]
    \centering
    \includegraphics[width=0.9\linewidth]{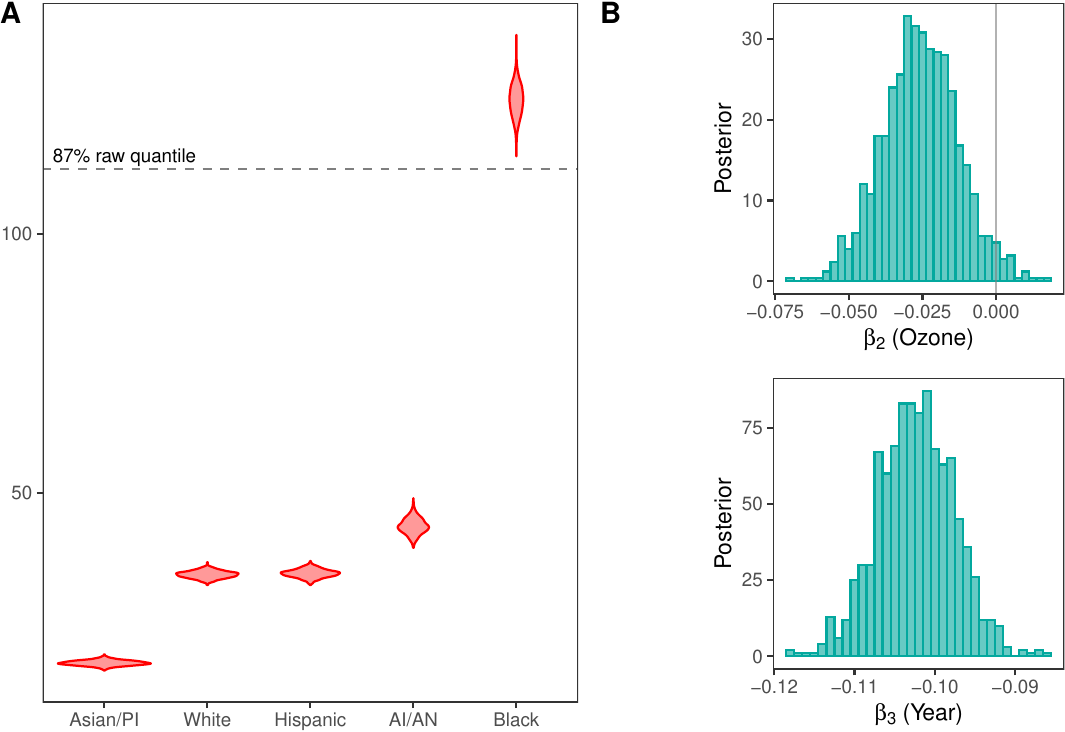}
    \caption{\textsf{A}: Posterior distributions of relative effect sizes at constant ozone concentration and year; AI/AN: American Indian/ Alaskan native, PI: Pacific Islanders. \textsf{B}: Posterior distributions of the coefficients corresponding to ozone and year.}
    \label{fig:effects}
\end{figure}
Our analysis reveals that the average asthma-related ED visit rates of the black group are approximately 3.6 times the average rates of the white (CI: 3.4, 3.8) and Hispanic (CI: 3.4, 3.9) groups, 2.9 (CI: 2.7, 3.1) times the average rate of the American Indian/Alaskan native group, and around 7.3 (CI: 6.8, 7.9) times the average rate of the Asian / Pacific Islander group, when all are exposed to the same levels of ozone concentration, during the year 2015. Here, the reported values represent posterior medians, and ``CI'' denotes the corresponding 95\% posterior credible intervals. Moreover, as seen from Figure~\ref{fig:effects}A, the relative effect size of the black group lies above the 87th percentile of the observed rates. These substantial differences highlight the persistent and disproportionate burden of asthma-related emergency department visits among the black population in California.

Figure~\ref{fig:effects}B displays the posterior distributions of the regression coefficients corresponding to ozone concentration and year. We observe a marginally negative association between ozone and asthma-related ED visit rates. This suggests a weak inverse relationship, with higher ozone levels associated with slightly lower asthma-related ED visit rates. Although the association is not statistically strongly supported, it is consistent with patterns reported in previous studies \citep{ZhuCarlinGelfand2003}. This also contrasts with the commonly held assumption that a higher concentration of ozone increases the risk of asthma-related symptoms. More specifically, we find that an increase of 0.005 ppm of ozone concentration leads to a drop in asthma-related ED visit rates by a factor of 0.97 (CI: 0.94, 1). This relationship suggests seasonal interactions of ozone \citep{quick2015jrssc}. Ozone levels also tend to be higher in rural inland areas (e.g., Central Valley), population density is lower, and ED utilization is lower due to access barriers, which does not necessarily imply a lower incidence of asthma. Urban coastal areas (e.g., San Francisco Bay area, Los Angeles) may have lower ozone but higher asthma rates due to other pollutants, higher population density and reporting, and different healthcare-seeking behaviors. Our analysis shows that each year is associated with a 10\% annual decrease in asthma-related ED visit rates, with the expected rate decreasing by a factor of 0.9 (CI: 0.89–0.91). This is consistent with the downward trend in Figure~\ref{fig:asthma-panel} of the Appendix. 

We fit the model on 15,725 space-time coordinates using a stacked posterior that combines 54 candidate models. The entire procedure took roughly 90 minutes, a considerable improvement over traditional full Bayesian MCMC approaches, where a single iteration could take over 15 minutes. Furthermore, our regression inference framework accounts for the uncertainty in the estimated ozone concentrations by integrating the samples from their posterior distribution. This ensures that both the latent process and its downstream effects are quantified in a fully probabilistic and computationally efficient manner. Moreover, the use of periodic Fourier basis functions for modeling temporal trends enables smooth interpolation and prediction at any time point within the study window. This flexibility would be difficult to achieve using conventional discrete-time models. In addition, a model relying on monthly basis functions would be unable to provide predictions for months with no available data, which is a likely scenario given the irregularity frequently encountered in environmental monitoring records.

\section{Discussion}\label{sec:discussion}
A key strength of our modular Bayesian framework is its capacity to jointly estimate ozone concentrations at arbitrary spatial and temporal resolutions, along with their association with an outcome of interest. Stacking of predictive densities enables us to obtain fully model-based uncertainty quantification for all model parameters by averaging over a collection of candidate models, each representing different specifications of process parameters. Future methodological directions may involve developing a multivariate areal time series model in place of the current linear regression framework, with the goal of capturing additional sources of variability and complex temporal-spatial dependencies inherent in the data. The data analysis presented in this article not only underscores existing racial disparities in health outcomes but also points to systemic inequities in environmental exposure, access to healthcare, and underlying social determinants of health. It emphasizes the need for targeted public health interventions and policies that address the structural drivers of asthma morbidity and improve health equity in racial and ethnic communities. Future data analysis may consider jointly estimating multiple exposure variables, such as $\text{NO}_2$, $\text{PM}_{2.5}$, and others, using multivariate spatial-temporal regression models to better account for potential correlations among pollutants and their combined effects on health outcomes.

\section*{Acknowledgement}
\noindent This work used computational and storage services of the Hoffman2 Shared Cluster provided by UCLA Office of Advanced Research Computing's Research Technology Group. The authors were supported by research grant R01ES027027 from the National Institute of Environmental Health Sciences (NIEHS), R01GM148761 from the National Institute of General Medical Science (NIGMS) and DMS2113778  from the Division of Mathematical Sciences of the National Science Foundation (NSF-DMS).

\section*{Data and code availability}
\noindent The asthma data is openly available at the \href{https://data.chhs.ca.gov/dataset/asthma-emergency-department-visit-rates}{CalHHS} website hosted by the California Department of Public Health (CDPH). The data on ozone measurements and geographic locations of the air quality monitoring stations are collected using the Air Quality and Meteorological Information System (\href{https://www.arb.ca.gov/aqmis2/aqdselect.php}{AQMIS}) database query tool of the California Air Resource Board (CARB). The code is provided as an \textsf{R} package, \texttt{spStackCOS}, available for download from the GitHub repository at \url{https://github.com/SPan-18/spStackCOS-dev}. Supplementary code to reproduce the results and findings presented in this article can be found at \url{https://github.com/SPan-18/AsthmaOzoneCA}.

\section*{Conflicts of interest}
\noindent All authors declare that they have no conflicts of interest.


\bibliographystyle{plainnat}
\bibliography{refs}

\appendix
\newpage
\begin{center}
\textbf{\large Appendix to ``Bayesian inference for spatial-temporal non-Gaussian data\\using predictive stacking''}
\end{center}
\setcounter{equation}{0}
\setcounter{figure}{0}
\setcounter{table}{0}
\setcounter{lemma}{0}
\setcounter{theorem}{0}
\setcounter{proposition}{0}
\makeatletter
\renewcommand{\theequation}{A\arabic{equation}}
\renewcommand{\thetheorem}{A\arabic{theorem}}
\renewcommand{\thelemma}{A\arabic{lemma}}
\renewcommand{\theproposition}{A\arabic{proposition}}
\renewcommand{\thefigure}{A\arabic{figure}}

\section{Descriptive overview of the dataset}
In this section, we present an exploratory data analysis of age-adjusted emergency department (ED) visit rates per 10,000 residents in California, across counties and racial groups. Figure~\ref{fig:asthma-race} display annual county-level ED visit rates for each racial group throughout the study period 2015-22, revealing clear spatial and racial patterns. Notably, the black population consistently reports higher ED visit rates across the study period, with elevated rates concentrated in the coastal areas of northern California. 
\begin{figure}[t]
    \centering
    \includegraphics[width=0.75\linewidth]{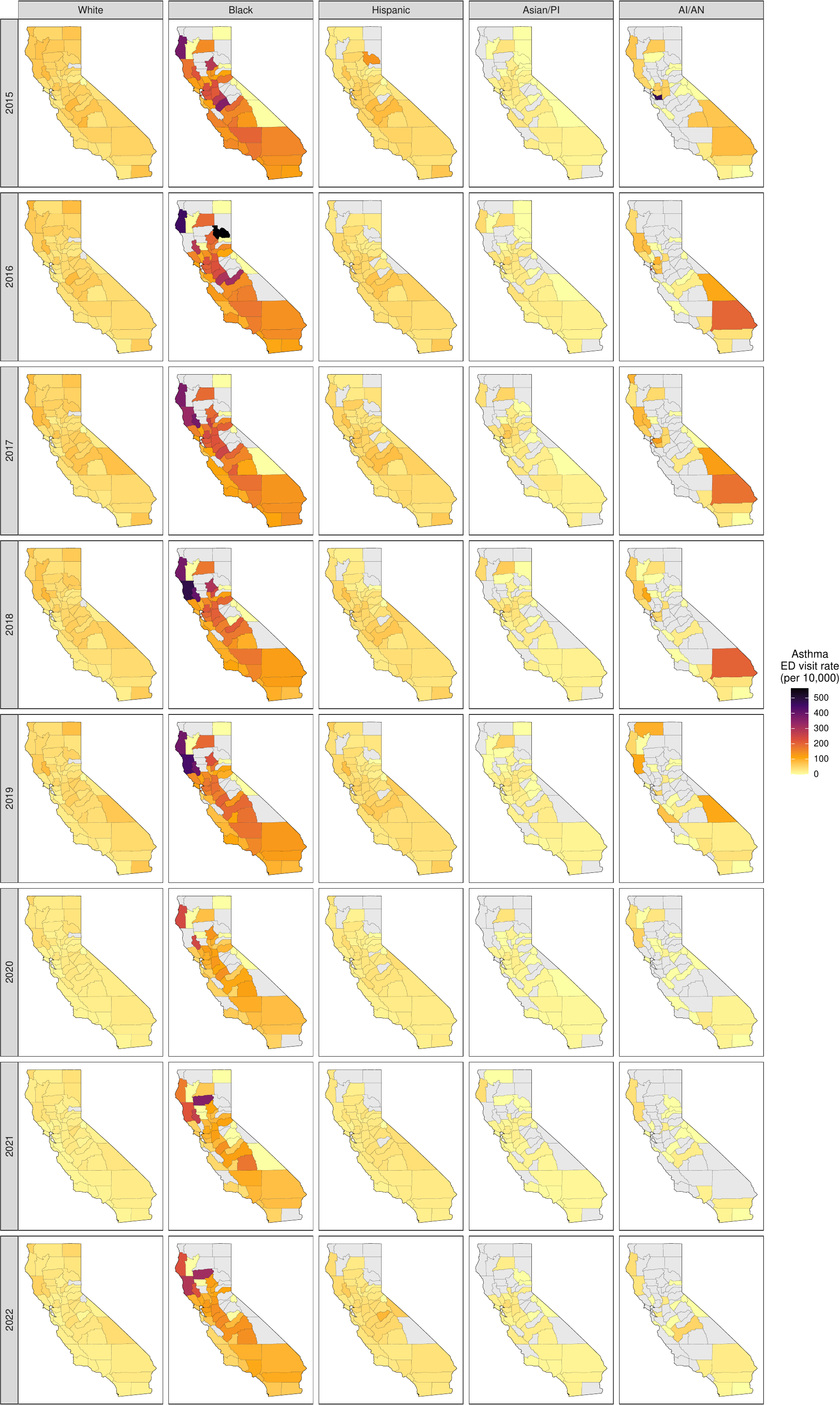}
    \caption{Asthma ED visit rates by race 2015-22}
    \label{fig:asthma-race}
\end{figure}
In terms of data completeness, the white racial group has 2.1\% missing data and the black group has 10.8\% missing data, while the Asian/Pacific Islander and American Indian/Alaska Native groups exhibit high levels of missingness, with 34.1\% and 59.3\% missing respectively. On the other hand, at the county level, Los Angeles, Riverside, Sacramento and San Diego counties have complete ED visit data. The counties with the most missing data are Lassen, Siskiyou, Calaveras, Del Norte, Plumas, Mariposa, Modoc, Glenn, Nevada, and Trinity, each exhibiting approximately 50–60\% missingness. These counties are predominantly rural and located in Northern California and the Sierra Nevada region. These areas are characterized by mountainous terrain and lower population density, which may contribute to challenges in healthcare access and data reporting completeness. These patterns underscore the importance of considering both spatial and demographic dimensions in analyzing the data.

Next, we plot the ED visit rates over time to examine temporal trends within each racial group. As shown in Figure~\ref{fig:asthma-panel}, most groups exhibit a gradual decline in rates over the study period, suggesting potential improvements in underlying health conditions, access to preventive care, or changes in reporting and healthcare utilization. A strong declining trend is particularly evident for the white and the Asian/Pacific Islander groups while other groups show more variability. These patterns provide important context for interpreting cross-sectional differences and support the inclusion of temporal components in our modeling framework.
\begin{figure}[t]
    \centering
    \includegraphics[width=0.95\linewidth]{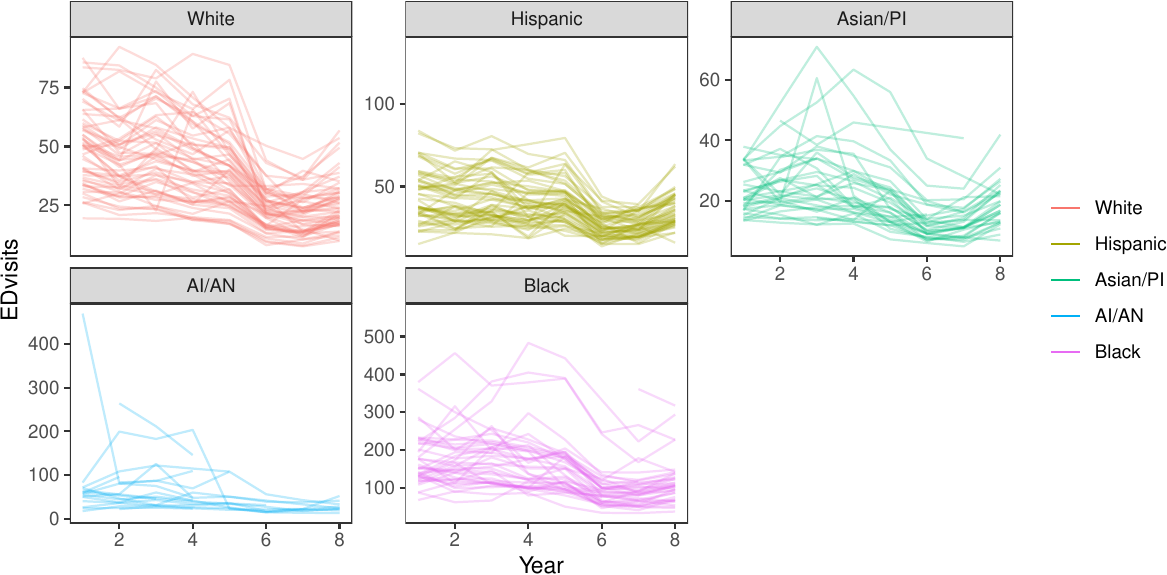}
    \caption{Annual trend in asthma-related ED visit rates per 10,000 residents for different racial groups.}
    \label{fig:asthma-panel}
\end{figure}

Next, we provide justification for the Gaussian modeling assumptions on the log-transformed ED visit rates. As shown in Figure~\ref{fig:asthma-hist}A, the distribution of the raw ED visit rates exhibits strong positive skewness across all racial groups, with a long right tail driven by a small number of counties reporting exceptionally high rates. This pronounced skewness is inconsistent with the symmetry assumed under Gaussian models. Applying a log transformation helps stabilize the variance and mitigate the effect of extreme values, resulting in a distribution that is more symmetric and approximately Gaussian (see Figure~\ref{fig:asthma-hist}B). The log-transformed rates exhibit reduced variability across the range of values and are more amenable to modeling under Gaussian assumptions.
\begin{figure}[t]
    \centering
    \includegraphics[width=0.95\linewidth]{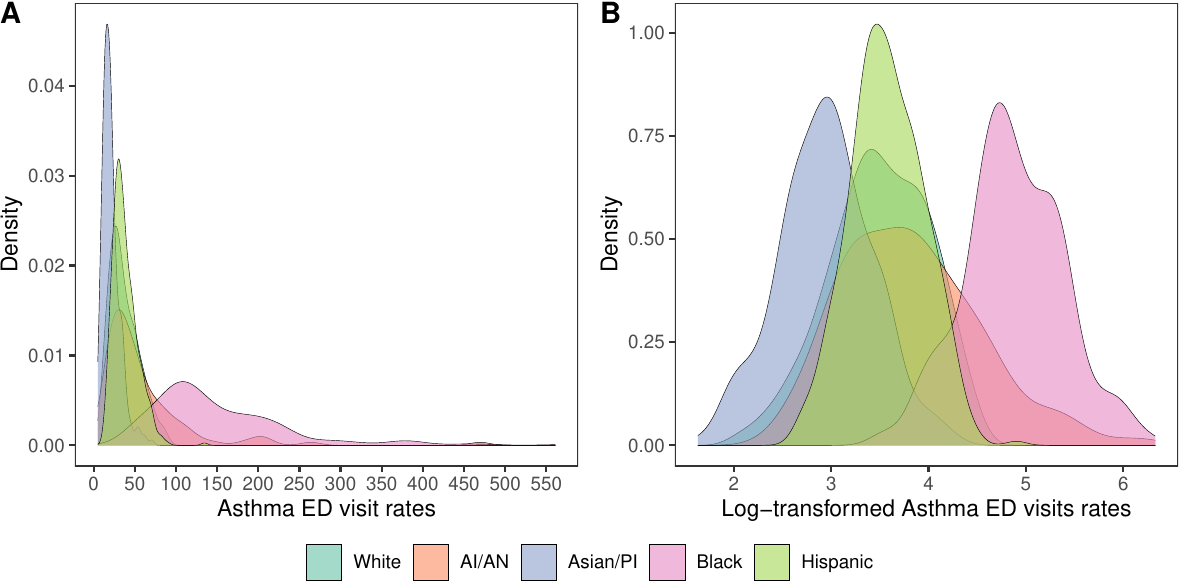}
    \caption{Race-specific estimated density of county-level age-adjusted rates (per 10,000) of asthma ED visits during 2015-2022.}
    \label{fig:asthma-hist}
\end{figure}

Further, we plot the recorded measurements of ozone levels across time and observe clear seasonal fluctuations and periodic patterns, indicative of strong temporal structure (see Figure~\ref{fig:oz-panel}). These recurring trends suggest the need to account for seasonality when modeling the temporal dynamics of ozone exposure. In addition to temporal variation, we also find substantial spatial heterogeneity in ozone concentrations, highlighting the importance of incorporating spatial effects into the modeling framework. Together, these observations motivate the use of flexible spatial-temporal models that can effectively capture both the periodic temporal behavior and the variability across different geographic locations.
\begin{figure}
    \centering
    \includegraphics[width=0.95\linewidth]{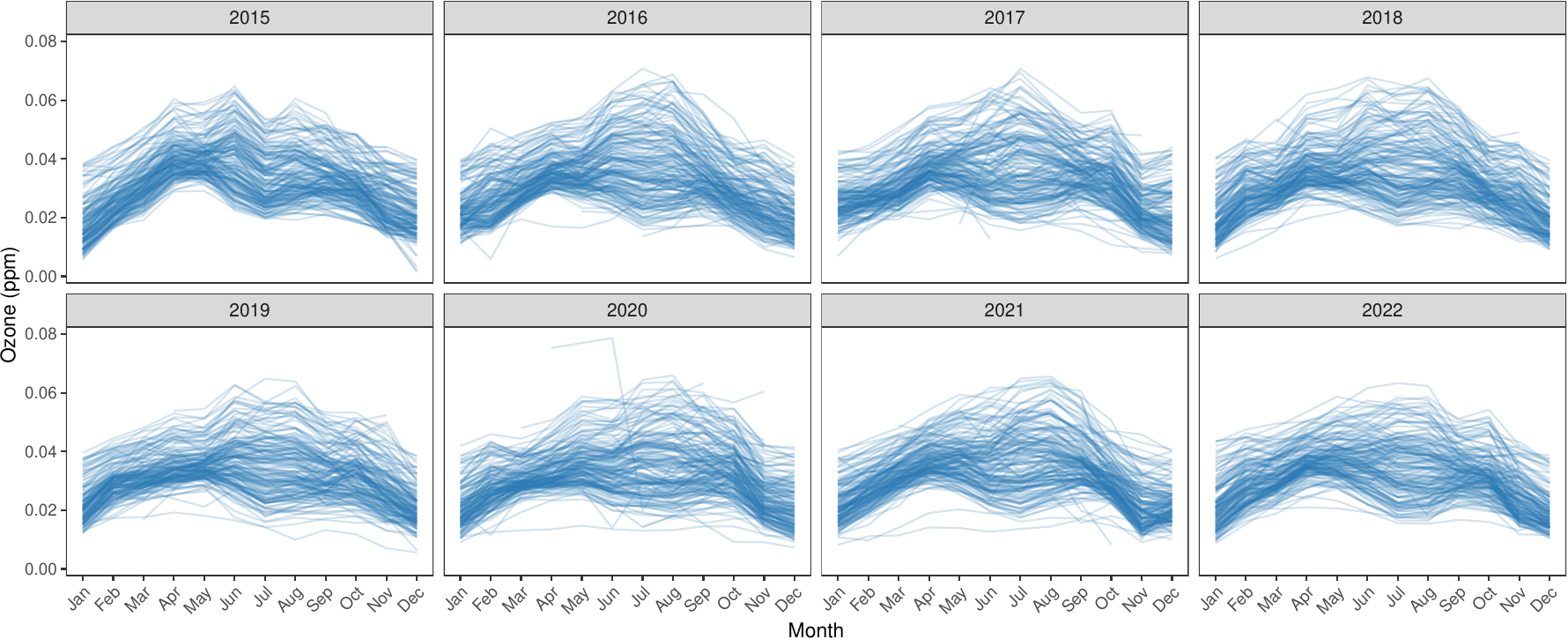}
    \caption{Monthly average ozone concentrations (in parts per million, ppm) recorded at various ozone monitoring sites across California from 2015 to 2022. A clear seasonal pattern is evident, with ozone levels peaking as well as exhibiting higher variability during the warmer months (May to October) and reaches troughs during the colder months (November to April).}
    \label{fig:oz-panel}
\end{figure}

\section{Additional simulation results}
Calculation of leave-one-out predictive densities is central to evaluating the objective function used to determine optimal stacking weights. In the main article, we present two approaches for computing LOO predictive densities. The first is an exact method based on closed-form expressions, while the second is an approximate method known as Pareto Smoothed Importance Sampling (PSIS). To assess the accuracy of PSIS relative to the exact method, we conduct a simulation study using a spatial regression model. Suppose $\{s_1, \ldots, s_n\}$ denote a collection of $n$ locations in the unit square, where we simulate spatially point-referenced responses $X(s_i)$ for $i = 1, \ldots, n$, using
\begin{equation}\label{eq:model1_supp}
    X(s_i) = W\beta + Z(s_i) + \epsilon(s_i)\;, \quad \epsilon(s_i) \ind \Norm (0, \delta^2 \sigma^2)\;,
\end{equation}
where $W$ is $n \times 2$ matrix comprising of an intercept and a predictor sampled from a standard normal distribution, $\beta = (2, 5)^{\T}$, $\epsilon(s_i)$ for each $i$ denote independent measurement error, $Z(s_i)$ are realizations of central spatial Gaussian process, given by $Z(s) \sim \GP (0, C_s(s, s'; \phi_s, \nu))$, with covariance function $C_s(\cdot, \cdot)$ as defined in \eqref{eq:corrfn}. We take $\sigma^2 = 0.4$, $\phi_s = 2$, $\nu = 0.5$ and $\delta^2 = 1.5$. 

We assign priors to $\beta$ and $\sigma^2$ as discussed in Section~\ref{subsec:candidate} of the main article. We fix the values of $\phi_s = 3.5$, $\nu = 0.75$, and $\delta^2 = 1$. Under this model and prior specification, we compute the exact leave-one-out predictive densities using closed-form expressions. To obtain the leave-one-out predictive densities via PSIS, we use posterior samples of $\beta$ and $Z = (Z(s_1), \ldots, Z(s_n))^{\T}$ obtained using the \textsf{R} package \texttt{spStack} \citep{spStack_r}, and calculate the log-pointwise predictive densities (lppd). We then apply the \texttt{psis()} function from the \textsf{R} package \texttt{loo} \citep{loo_pkg} to compute the leave-one-out predictive densities using stabilized importance weights. Figure~\ref{fig:exact-psis} presents a comparison of the two methods, showing no visible differences in the resulting predictive densities. However, the exact method is considerably more computationally intensive than PSIS, demonstrating the latter’s efficiency for large datasets.
\begin{figure}[t]
    \centering
    \includegraphics[width=0.5\linewidth]{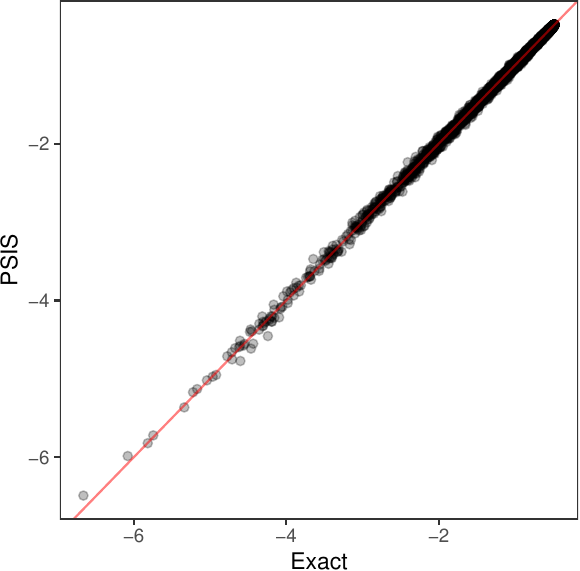}
    \caption{Comparison of leave-one-out predictive densities computed using exact closed form expression given by \eqref{eq:exactLOOPD} and Pareto smoothed importance sampling (PSIS) for a spatial regression model on a simulated dataset of sample size 5000.}
    \label{fig:exact-psis}
\end{figure}

\section{Technical details}\label{app:technical}
\renewcommand{\theproposition}{\arabic{proposition}}
\begin{proposition}
    For any $a<b$ and $c<d$, suppose $\tilde{C}_t(a, b, c, d; \phi) = \int_{c}^{d} \int_{a}^{b} C_t(t, t'; \phi_t) dt dt'$, then
    \begin{itemize}
        \item[(a)] for non-overlapping $(a, b)$ and $(c, d)$, with $a < b \leq c < d$,
        \[
        \tilde{C}_t(a, b, c, d; \phi_t) = \frac{1}{\phi_t^2} \left[ F(a, d) + F(b, c) - F(a, c) - F(b, d) \right]\;,
        \]
        \item[(b)] if $c = a$ and $d = b$, then
        \[
        \tilde{C}_t(a, b, a, b; \phi_t) = \frac{2}{\phi_t^2} \left[ \phi_t(b - a) + F(a, b) - 1 \right] \;,
        \]
        \item[(c)] if intervals $(a, b)$ and $(c, d)$ overlap, with $a \leq c < b < d$ or $a < c < b \leq d$,
        \[
        \tilde{C}_t(a, b, c, d; \phi_t) = \frac{1}{\phi_t^2} \left[ 2 \phi_t (b-c) + F(a, d) + F(c, b) - F(a, c) - F(b, d) \right]\;,
        \]
        \item[(d)] if $(a, b)$ is nested within $(c, d)$, i.e. either $c \leq a < b < d$ or $c < a < b \leq d$,
        \[
        \tilde{C}_t(a, b, c, d; \phi_t) = \frac{1}{\phi_t^2} \left[ 2\phi_t (b-a) + F(a, d) + F(c, b) - F(c, a) - F(b, d) \right]\;,
        \]
    \end{itemize}
    where $F(a_1, a_2) = F(a_1, a_2; \phi_t) = \exp \left( -\phi_t (a_2 - a_1) \right)$ for any $a_1, a_2 \in \calT$.
\end{proposition}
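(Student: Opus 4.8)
The plan is to compute the double integral
$\tilde{C}_t(a,b,c,d;\phi_t)=\int_c^d\!\int_a^b e^{-\phi_t|t-t'|}\,dt\,dt'$
by taming the absolute value $|t-t'|$, which is the sole source of difficulty. The guiding observation is that on any sub-rectangle $[\alpha,\beta]\times[\gamma,\delta]$ lying entirely on one side of the diagonal $t=t'$ the kernel collapses to a plain exponential and the integral is elementary; the four stated cases differ only in how the rectangle $[a,b]\times[c,d]$ meets that diagonal. I would also use throughout the symmetry $\tilde{C}_t(\alpha,\beta,\gamma,\delta)=\tilde{C}_t(\gamma,\delta,\alpha,\beta)$, which is immediate from $|t-t'|=|t'-t|$.

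First I would prove (a). When $a<b\le c<d$ one has $t\le b\le c\le t'$ throughout, so $|t-t'|=t'-t$ and the inner integration gives
\[
\tilde{C}_t=\frac{1}{\phi_t}\int_c^d e^{-\phi_t t'}\bigl(e^{\phi_t b}-e^{\phi_t a}\bigr)\,dt'.
\]
Carrying out the remaining one-dimensional integral and regrouping the four exponentials in terms of $F(a_1,a_2)=e^{-\phi_t(a_2-a_1)}$ yields $\frac{1}{\phi_t^2}[F(a,d)+F(b,c)-F(a,c)-F(b,d)]$. Next, for the diagonal case (b) with $c=a,\ d=b$, symmetry of the kernel gives $\tilde{C}_t(a,b,a,b)=2\int_a^b\!\int_a^{t'}e^{-\phi_t(t'-t)}\,dt\,dt'$; the inner integral equals $\phi_t^{-1}\bigl(1-e^{-\phi_t(t'-a)}\bigr)$, and integrating over $t'\in(a,b)$ produces $\frac{2}{\phi_t^2}[\phi_t(b-a)+F(a,b)-1]$.

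The remaining cases reduce to (a) and (b) by partitioning the rectangle along the lines $t=c,\ t=b$ (resp.\ $t'=a,\ t'=b$) so that every piece is either a one-sided rectangle (formula (a), possibly after a symmetry swap to reorient the interval pair) or the diagonal square (formula (b)). For overlap (c) with $a\le c<b\le d$, splitting $t\in(a,c)\cup(c,b)$ and then $t'\in(c,b)\cup(b,d)$ gives
\[
\tilde{C}_t(a,b,c,d)=\tilde{C}_t(a,c,c,d)+\tilde{C}_t(c,b,c,b)+\tilde{C}_t(c,b,b,d),
\]
whose three summands come from (a), (b), (a) respectively. For nesting (d) with $c\le a<b\le d$, the analogous split $t'\in(c,a)\cup(a,b)\cup(b,d)$ gives
\[
\tilde{C}_t(a,b,c,d)=\tilde{C}_t(a,b,c,a)+\tilde{C}_t(a,b,a,b)+\tilde{C}_t(a,b,b,d),
\]
again of types (a), (b), (a). Substituting the closed forms already established and simplifying completes the argument.

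The only genuine work is the final algebraic simplification. In each of (c) and (d) several $F$ terms are evaluated at coinciding arguments, contributing spurious $\pm1$ constants that (together with the explicit $-1$ from the diagonal piece) must cancel, while certain exponentials attached to the shared endpoints—$F(c,d)$ in (c), $F(a,b)$ in (d)—appear with multiplicity and must also cancel, leaving the single surviving linear term $2\phi_t(b-c)$ (resp.\ $2\phi_t(b-a)$) with the correct coefficient. I expect this bookkeeping—above all, keeping straight which sub-rectangle lies above versus below the diagonal, so that each invocation of (a) is applied to the correctly oriented pair of intervals—to be the main, if entirely routine, obstacle.
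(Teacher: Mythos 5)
Your proposal is correct and follows essentially the same route as the paper: case (a) by direct integration, the diagonal square by splitting at $t=t'$, and the overlapping and nested cases by partitioning the rectangle $[a,b]\times[c,d]$ into exactly the same three sub-rectangles the paper uses ($I_1+I_2+I_3$), with the middle piece being the diagonal square. The only difference is presentational: you recycle the closed form from (a) (via the symmetry swap) for the two off-diagonal pieces, whereas the paper recomputes each of those elementary integrals from scratch before combining; the cancellations you anticipate ($F$ at coinciding arguments and the shared-endpoint exponentials) do work out exactly as you describe.
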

\begin{proof}
\begin{itemize}
\item[(a)] Without loss of generality, we assume $a<b<c<d$. The condition $a<b<c<d$ denotes that the intervals $(a, b)$ and $(c, d)$ are disjoint and $(a, b)$ lies on the left of $(c, d)$. Since, $t \in (a, b)$ and $t' \in (c, d)$. This means $t' \geq t$ holds true always, and hence, $|t - t'| = t' - t$. So, we evaluate the integral $I_1$ as follows.
\begin{equation*}
\begin{split}
    I_1 &= \int_{c}^{d} \int_{a}^{b} \exp(- \phi_t |t - t'|) dt dt'
    = \int_{c}^{d} \int_{a}^{b} \exp(- \phi_t (t' - t)) dt dt'\\
    &= \frac{1}{\phi_t} \left( e^{\phi_t b} - e^{\phi_t a} \right) \int_{c}^{d} \exp(- \phi_t t') dt'
    = \frac{1}{\phi_t^2} \left( e^{\phi_t b} - e^{\phi_t a} \right) \left(e^{-\phi_t c} - e^{- \phi_t d} \right)\\
    &= \frac{1}{\phi_t^2} \left( e^{\phi_t (b - c)} - e^{\phi_t (b-d)} - e^{\phi_t (a-c)} + e^{\phi_t (a-d)}\right)\\
    &= \frac{1}{\phi_t^2} \left[ F(b, c) - F(a, c) - F(b, d) + F(a, d) \right]
\end{split}
\end{equation*}
The equality case $b = c$ simply implies $F(c, b) = 0$.
\item[(b1)] If $c = a$ and $d = b$, then we are dealing with the integration
\begin{equation*}
\begin{split}
    \int_{a}^{b} \int_{a}^{b} e^{- \phi_t |t-t'|} dt dt' &= \int_{a}^{b} \left( \int_{a}^{t'} e^{- \phi_t |t-t'|} dt + \int_{t'}^{b} e^{- \phi_t |t-t'|} dt \right) dt' \\
    &= \int_{a}^{b} \left( \int_{a}^{t'} e^{- \phi_t (t'-t)} dt + \int_{t'}^{b} e^{- \phi_t (t-t')} dt \right) dt'\\
    &= \int_{a}^{b} \left( e^{- \phi_t t'} \cdot \frac{1}{\phi_t} \left(e^{\phi_t t'} - e^{\phi_t a} \right) + e^{\phi_t t'} \frac{1}{\phi_t} \left( e^{- \phi_t t'} - e^{- \phi_t b} \right) \right) dt'\\
    &= \frac{1}{\phi_t} \int_{a}^{b} 1 - e^{\phi_t (a - t')} + 1 - e^{\phi_t (t' - b)} dt'\\
    &= \frac{2}{\phi_t} (b - a) - \frac{1}{\phi_t} e^{\phi_t a} \int_{a}^{b} e^{- \phi_t t'} dt' - \frac{1}{\phi_t} e^{- \phi_t b} \int_{a}^{b} e^{\phi_t t'} dt'\\
    &= \frac{2}{\phi_t} (b - a) - \frac{1}{\phi_t^2} e^{\phi_t a} \left( e^{-\phi_t a} - e^{- \phi_t b} \right) - \frac{1}{\phi_t^2} e^{- \phi_t b} \left( e^{\phi_t b} - e^{\phi_t a} \right)\\
    &= \frac{2}{\phi_t} (b - a) - \frac{1}{\phi_t^2} \left( 1 - e^{\phi_t (a-b)} + 1 - e^{\phi_t (a-b)} \right)\\
    &= \frac{2}{\phi_t^2} \left( \phi_t(b - a) + e^{-\phi_t (b-a)} - 1 \right)\\
    &= \frac{2}{\phi_t^2} \left[ \phi_t(b - a) + F(a, b) - 1 \right]\;.
\end{split}
\end{equation*}

\item[(b2)] In this case, we have $a < c < b < d$. The integration relies on splitting the integral into parts depending on the sign of the term $|t-t'|$, where $t \in (a, b)$ and $t' \in (c, d)$. We split the integral into three double integrals $I_1$, $I_2$ and $I_3$, where $I_1$ is on $(c, d) \times (a, c)$, $I_2$ is on $(c, b) \times (c, b)$, and $I_3$ is on $(b, d) \times (c, b)$. 
\begin{equation*}
\begin{split}
    \int_{c}^{d} \int_{a}^{b} e^{- \phi_t |t - t'|} dt dt'
    &= \int_{c}^{d} \int_{a}^{c} e^{- \phi_t (t' - t)} dt dt' + \int_{c}^{b} \int_{c}^{b} e^{- \phi_t |t - t'|} dt dt' + \int_{b}^{d} \int_{c}^{b} e^{- \phi_t (t' - t)} dt dt'\\
    &= I_1 + I_2 + I_3\;,
\end{split}
\end{equation*}
In $I_1$, $t \leq t'$, so we have $|t - t'| = t' - t$. In $I_2$, $t - t'$ may be both positive and negative. And, for $I_3$, $t \leq t'$, so we have $|t - t'| = t' - t$. We tackle each of these integrals separately.
\begin{equation*}
\begin{split}
    I_1 &= \int_{c}^{d} \int_{a}^{c} e^{- \phi_t (t' - t)} dt dt'
    = \int_{c}^{d} e^{- \phi_t t'} \left(\int_{a}^{c} e^{\phi_t t} dt \right) dt'\\
    &= \frac{1}{\phi_t} \left( e^{\phi_t c} - e^{\phi_t a} \right) \int_{c}^{d} e^{- \phi_t t'} dt'
    = \frac{1}{\phi_t^2} \left( e^{\phi_t c} - e^{\phi_t a} \right) \left( e^{- \phi_t c} - e^{- \phi_t d} \right)\\
    &= \frac{1}{\phi_t^2} \left( 1 - e^{\phi_t (c-d)} - e^{\phi_t (a-c)} + e^{\phi_t (a - d)} \right)\;.
\end{split}
\end{equation*}
Similarly, we find the integral $I_3$, which is very similar to $I_1$.
\begin{equation*}
\begin{split}
    I_3 &= \int_{b}^{d} \int_{c}^{b} e^{- \phi_t (t' - t)} dt dt' = \int_{b}^{d} e^{- \phi_t t'} \left( \int_{c}^{b} e^{\phi_t t} dt \right) dt'\\
    &= \frac{1}{\phi_t^2} \left( e^{\phi_t b} - e^{\phi_t c} \right) \left( e^{- \phi_t b} - e^{- \phi_t d} \right) \\
    &= \frac{1}{\phi_t^2} \left( 1 - e^{\phi_t (b-d)} - e^{\phi_t (c-b)} + e^{\phi_t (c-d)} \right)
\end{split}
\end{equation*}
Following part~(b) of Proposition~\ref{prop:timeCov}, we have 
\begin{equation*}
\begin{split}
    I_2 = \int_{c}^{b} \int_{c}^{b} e^{- \phi_t |t - t'|} dt dt' = \frac{2}{\phi_t^2} \left( \phi_t(b - c) + e^{-\phi_t (b-c)} - 1 \right)
\end{split}
\end{equation*}
Combining $I_1$, $I_2$ and $I_3$, we have
\begin{equation*}
\begin{split}
    I_1 + I_2 + I_3 &= \frac{1}{\phi_t^2} \left( 1 - e^{\phi_t (c-d)} - e^{\phi_t (a-c)} + e^{\phi_t (a - d)} \right)
     + \frac{2}{\phi_t^2} \left( \phi_t(b - c) + e^{-\phi_t (b-c)} - 1 \right) \\
    & \quad + \frac{1}{\phi_t^2} \left( 1 - e^{\phi_t (b-d)} - e^{\phi_t (c-b)} + e^{\phi_t (c-d)} \right) \\
    &= \frac{1}{\phi_t^2} \left( 2 - e^{\phi_t (a-c)} + e^{\phi_t (a - d)} - e^{\phi_t (b-d)} - e^{\phi_t (c-b)}\right)\\
    & \quad + \frac{2}{\phi_t^2} \left( \phi_t(b - c) + e^{-\phi_t (b-c)} \right) - \frac{2}{\phi_t^2} \\
    &= \frac{2}{\phi_t} (b - c) + \frac{1}{\phi_t^2} \left(e^{\phi_t (c-b)}  - e^{\phi_t (a-c)} + e^{\phi_t (a - d)} - e^{\phi_t (b-d)}\right)\\
    &= \frac{1}{\phi_t^2} \left[ 2 \phi_t (b-c) + e^{- \phi_t (b-c)}  - e^{-\phi_t (c-a)} - e^{-\phi_t (d-b)} + e^{-\phi_t (d-a)} \right] \\
    &= \frac{1}{\phi_t^2} \left[ 2 \phi_t (b-c) + F(c, b) - F(a, c) - F(b, d) + F(a, d) \right]
\end{split}
\end{equation*}
\item[(c)] In this case the interval $(a, b)$ is nested within the interval $(c, d)$. Hence, we have $c<a<b<d$. We split the integral with respect to $t'$ on $(c, d)$ into three - one over $(c, a)$, one over $(a, b)$ and the last over $(b, d)$. Hence, we have
\begin{equation*}
\begin{split}
    \int_{c}^{d} \int_{a}^{b} e^{-\phi_t |t - t'|} dt dt' &=  \int_{c}^{a} \int_{a}^{b} e^{-\phi_t |t - t'|} dt dt' +  \int_{a}^{b} \int_{a}^{b} e^{-\phi_t |t - t'|} dt dt' +  \int_{b}^{d} \int_{a}^{b} e^{-\phi_t |t - t'|} dt dt'\\
    &= I_1 + I_2 + I_3\;.
\end{split}
\end{equation*}
We evaluate each of the integrals $I_1$, $I_2$ and $I_3$ separately. First, we find
\begin{equation*}
\begin{split}
    I_1 &= \int_{c}^{a} \int_{a}^{b} e^{-\phi_t (t - t')} dt dt' = \int_{c}^{a} e^{\phi_t t'} \left( \int_{a}^{b} e^{-\phi_t t} dt \right) dt'\\
    &= \frac{1}{\phi_t^2} \left( e^{-\phi_t a} - e^{-\phi_t b} \right) \left( e^{\phi_t a} - e^{\phi_t c} \right)\\
    &= \frac{1}{\phi_t^2} \left( 1 - e^{\phi_t (c-a)} - e^{\phi_t (a-b)} + e^{\phi_t (c-b)} \right)\;.
\end{split}
\end{equation*}
Similarly, we find the integral $I_3$ similar to $I_1$.
\begin{equation*}
\begin{split}
    I_3 &= \int_{b}^{d} \int_{a}^{b} e^{-\phi_t (t' - t)} dt dt' = \int_{b}^{d} e^{-\phi_t t'} \left( \int_{a}^{b} e^{\phi_t t} dt \right) dt'\\
    &= \frac{1}{\phi_t^2} \left( e^{\phi_t b} - e^{\phi_t a} \right) \left( e^{-\phi_t b} - e^{-\phi_t d} \right)\\
    &= \frac{1}{\phi_t^2} \left( 1 - e^{\phi_t (b-d)} - e^{\phi_t (a-b)} + e^{\phi_t (a-d)} \right)
\end{split}
\end{equation*}
Following part~(b) of Proposition~\ref{prop:timeCov}, we find the integral $I_2$. Next, we combine $I_1$, $I_2$ and $I_3$.
\begin{equation*}
\begin{split}
    I_1 + I_2 + I_3 &= \frac{1}{\phi_t^2} \left( 1 - e^{\phi_t (c-a)} - e^{\phi_t (a-b)} + e^{\phi_t (c-b)} \right) + \frac{2}{\phi_t^2} \left( \phi_t(b - a) + e^{-\phi_t (b-a)} - 1 \right)\\
    & \quad + \frac{1}{\phi_t^2} \left( 1 - e^{\phi_t (b-d)} - e^{\phi_t (a-b)} + e^{\phi_t (a-d)} \right)\\
    &= \frac{1}{\phi_t^2} \left( 2 - 2 e^{\phi_t (a-b)} - e^{\phi_t (c-a)} + e^{\phi_t (c-b)} - e^{\phi_t (b-d)} + e^{\phi_t (a-d)} \right) \\
    & \quad + \frac{2}{\phi_t} (b-a) + \frac{1}{\phi_t^2} \left( 2 e^{\phi_t (a-b)} - 2\right)\\
    &= \frac{1}{\phi_t^2} \left[ 2\phi_t (b-a) + e^{-\phi_t (b-c)} - e^{-\phi_t (a-c)} - e^{-\phi_t (d-b)} + e^{-\phi_t (d-a)} \right]\\
    &= \frac{1}{\phi_t^2} \left[ 2\phi_t (b-a) + F(c, b) - F(c, a) - F(b, d) + F(a, d) \right]
\end{split}
\end{equation*}
\end{itemize}
\end{proof}

\end{document}